\def\adl@drawiv#1#2#3{%
        \hskip.5\tabcolsep
        \xleaders#3{#2.5\@tempdimb #1{1}#2.5\@tempdimb}%
                #2\z@ plus1fil minus1fil\relax
        \hskip.5\tabcolsep}
\newcommand{\cdashlinelr}[1]{%
  \noalign{\vskip\aboverulesep
           \global\let\@dashdrawstore\adl@draw
           \global\let\adl@draw\adl@drawiv}
  \cdashline{#1}
  \noalign{\global\let\adl@draw\@dashdrawstore
           \vskip\belowrulesep}}
\newcommand{\cmark}{$+$}
\newcommand{\xmark}{$-$}
\newtheorem{theorem}{Theorem}
\newtheorem{proposition}{Proposition}
\newtheorem{lemma}{Lemma}
\newtheorem{example}{Example}
\theoremstyle{definition}
\newtheorem{definition}{Definition}
\newlength{\wordlength}
\newcommand{\mathwordbox}[3][c]{\settowidth{\wordlength}{$#3$}\text{\makebox[\wordlength][#1]{$#2$}}}
\newcommand{\set}[1]{\{#1\}}
\DeclareMathOperator*{\argmax}{argmax}
\newcommand{\candidates}{C}
\newcommand{\aggscore}{\sigma}
\newcommand{\representative}{\mathit{repr}}
\newcommand{\pluralityscore}{\pi}
\newcommand{\norepr}{r}
\newcommand{\winner}{{j^*}}
\newcommand{\ie}{that is}
\newtheorem*{rep@theorem}{\rep@title}
\newcommand{\newreptheorem}[2]{%
\newenvironment{rep#1}[1]{%
 \def\rep@title{#2 \ref{##1}}%
 \begin{rep@theorem}}%
 {\end{rep@theorem}}}
\title{A Mathematical Analysis of an\\Election System Proposed\\ by Gottlob Frege}
\renewcommand{\leq}{\le}
\renewcommand{\geq}{\ge}
\author{Paul Harrenstein\thanks{Department of Computer Science, University of Oxford, UK} \and Marie-Louise Lackner\thanks{Christian Doppler Laboratory for Artificial Intelligence and Optimization for Planning and Scheduling, TU Wien, Vienna, Austria} \and Martin Lackner\thanks{Institute of Logic and Computation, TU Wien, Vienna, Austria}}
\date{}
\begin{document}
	
\maketitle

\begin{abstract}
In 1998 a long-lost proposal for an election law by  Gottlob Frege (1848--1925) was rediscovered in the \emph{Th\"uringer Universit\"ats- und Landesbibliothek} in Jena, Germany.
The method that Frege proposed  for the election of representatives of a constituency features a remarkable concern for the representation of minorities. 
Its core idea is that votes cast for unelected candidates are carried over to the next election, while  elected candidates incur a cost of winning. We prove that this sensitivity to past elections guarantees a proportional representation of political opinions in the long run.
We find that through a slight modification of Frege's original method even stronger proportionality
guarantees can be achieved. 
This modified version of Frege's method also provides a novel solution to the apportionment problem, which is distinct from all of the best-known apportionment methods, while still possessing noteworthy proportionality properties.
\end{abstract}

\section{Introduction}

In the summer of~1998 a surprising discovery was made in the \emph{Th\"uringer Universit\"ats- und Landes\-bibliothek} (ThULB) in Jena. 
Hidden among the legacy of the German politician Cle\-mens von Delbr\"uck~(1856-1921), Uwe Dathe, curator at ThULB, found a typescript titled \emph{Vorschl\"age f\"ur ein Wahlgesetz}, which translates to `Proposals for a Voting Law'. 
The author of the typescript turned out to be no one less than Gottlob Frege~(1848--1925), the illustrious logician and recipient of the letter in which Bertrand Russell expounds his famous paradox.
Although the typescript is undated, circumstantial evidence points to~1918 as the almost certain year of composition.
The manuscript was finally published in the original German in the year 2000 \citep{gabriel_dathe:2000a}, accompanied by an extensive and excellent introduction by \citet{dathe-kienzler:2000a}. 
The original typescript has been made digitally available by the \emph{Th\"uringer Universit\"ats- und Landesbibliothek} in Jena~\citep{frege:1918a}.

The discovery of Frege's proposal was surprising and remarkable. 
Regarding it merely as a historical curiosity, however, would not do justice to its originality and perspective.
Rather, we feel that some of its underlying ideas shed a fresh light on modern discussions on how to elect representatives to political assemblies.
In particular, Frege takes a highly original temporal point of view, where the votes cast for unelected candidates in an election are carried over to the next election. He furthermore proposes a rudimentary system of how votes can be delegated from candidate to candidate.
In this paper, we conduct a mathematical analysis of Frege's proposal and can prove that, even from the standpoint of modern social choice and apportionment theory, it fares remarkably well.
In the following, we present Frege's proposal in detail and outline our findings.

\subsection{Structure and content of Frege's proposal}

Frege's typescript consists of four main parts and, in its original form, is presented on~24~numbered printed pages
and is appended by two fold-out tables, \emph{Tafel~I} and \emph{Tafel~II}.
In the following, the page numbering refers to Frege's proposal as published in the year~2000 \citep{frege:2000a}, whereas the bracketed numbers refer to the sheets of the original typescript from~1918 \citep{frege:1918a}.

In a \emph{preliminary note} (\emph{Vorbemerkung}) (page~297, [1]--[3]), Frege sketches the general constitutional provisions of his proposal.
He presumes a division of the electorate in constituencies, with the voters in each constituency delegating an elected representative to an electoral body like the German \emph{Reichstag} (page~297,~[1]). 
This is very much in line with constitutional law in Imperial Germany 
(1871--1918), where the representatives of the constituencies were elected by 
absolute majority and where, in case none of the candidates succeeded in 
securing an absolute majority, a run-off took place between the two candidates 
with the highest number of votes.%
\footnote{The succeeding Weimar Republic (1918--1933) adopted a voting system based on proportional representation, where multiple representatives were elected in each of a smaller number of constituencies of a considerably larger size than was usual in Imperial Germany.}
In this part, he moreover specifies restrictions on active and passive suffrage as he considered them appropriate.

The most original aspects of Frege's election law are to be found in the second part (on pages~297--299, [2]--[4]) of the manuscript. In eleven articles numbered~\S5 through~\S15, Frege presents the \emph{voting method} by means of which the representative of a constituency is chosen.

Finally, Frege makes a short concluding remark (\emph{Schlussbemerkung}, page~299, [4]--[5]) concerning the use of referenda to settle political disputes, which is followed by an extensive discussion elucidating the various aspects of his voting law (\emph{Erl\"auterungen}, pages 299--311, [5]--[24]). The two tables appended by Frege illustrate his voting method by means of two examples he constructed for the purpose (pages~312--313).

\subsection{Political views underlying Frege's proposal}

Frege's proposal for a voting law displays a remarkable juxtaposition of highly conservative and  nationalistic views with classical liberal ideals.
Frege's conservative views are most clearly manifested in his proposal to restrict the right to vote to married men without criminal convictions (``\emph{unbescholten}''), who have performed military service, and who received no state support (``\emph{Almosen}'') in the previous year (197, [1]--[2]).
He explicitly declined women's suffrage on grounds that the husband be the head of the family, the alleged political unit of German society as he saw it (311,~[23]--[24]).
To put this into perspective, it should be noted that women's suffrage was made law on November 30 of the same year that Frege presumably wrote his proposal, with the first German women casting their votes for the \emph{Wahl zur Deutschen Nationalversammlung} on January 19, 1919.

The deep conservatism inherent in these considerations%
\footnote{Later in his life, Frege's conservatism and nationalism would give way to more extreme views on the political far right, in particular a strong antisemitic position, as witnessed by his diary entries from 1924 \citep{gabriel1994gottlob,doi:10.1080/00201749608602425}.
}
is in quite some contrast with the more liberal ideals underlying his {voting method}.
According to Frege himself, the voting method constitutes the fundamental thought of his voting law and can largely be considered independently of his ideas about suffrage (197,~[1]).
In this paper, we will concentrate on the mathematical aspects of Frege's voting method, and leave the other parts of his proposal largely uncommented.

Frege's voting method is based on the classical liberal notion that the \emph{voters} in a constituency should be represented by an elected representative in a national assembly or parliament while upholding the ideal of one-person-one-vote.
Frege's main concern is to guarantee that \emph{no voters' votes are lost in the election process} (197,~[1]; 308--9,~[19]--[20]) by
ensuring that even political minorities should send the constituency's representative from their midst at some point in time.
Failing to do so is a problem that is notoriously inherent in electoral systems where
representatives are elected in single-member constituencies by majoritarian or \emph{first-past-the-post} methods, that is, systems where the voters indicate a single candidate on their ballot, and the candidate occurring on most ballots is elected as representative of the constituency. Arguably, any vote cast on a minority candidate serves the voter, and the candidate, just as well as had the voter not 
participated in the election, and can consequently be considered lost (\emph{Erl\"auterungen}, 303,~[10]).

\subsection{Frege's core idea: elections as a process over time}

Frege's solution to the problem of votes for non-winning candidates being lost during elections
is to acknowledge the fact that elections take place repeatedly over time. 
He sees the election of representatives not so much as a one-shot event, but rather as a series of connected and interdependent events proceeding in rounds.
In his proposal, elections are held every five years (\S6, 298,~[2]), where voters submit a (plurality) ballot, indicating a single candidate, just as it was law in Germany between~1871 and~1918.
He proposes, however, that the representative of a constituency should not be elected on the basis of the votes received in the current election alone, \emph{but also on those cast in previous elections}.
More precisely, every candidate has a \emph{voting score} (``\emph{Stimmenzahl}''), which, after  initially being set to zero (\S5,\S7, 297--8,~[2]), is increased by the number of votes received in each election.
Frege makes a crucial proviso for the incumbent representative
of the constituency: on the Friday before the election, the incumbent's voting score is decreased by the integer part of the  \textit{average voting score} of all candidates (\S14, 299,~[4]). The votes thus subtracted, Frege argues, have served their purpose and cannot be considered
lost (\emph{Vorbemerkung}, 197,~[2]).
We suggest that they can be viewed as the \emph{cost of winning} that the elected representative incurs.
The candidate with the highest voting score after the election is then elected as representative for the next five years.
Possible ties are broken on the basis of age and, in case of an equal number of days lived, by lot (\S13, 299,~[4]).

With Frege's voting method, candidates that only attract minor support among the electorate keep accumulating votes over time, and at some point will be elected as representative.
Thus, also minority opinions are guaranteed to be represented in elected political assemblies, which complies with Frege's guiding principle that no votes should be lost.
Should his voting law be adopted, Frege anticipates the lively participation of all voters in the elections (\emph{Erl\"auterungen}, 303--4,~[11]), a concern that has also attracted attention in social choice theory \citep{fishburn1983paradoxes,moulin1988condorcet}.

To the same end, Frege arranges for the installation of a maximum of twenty-five \emph{choice candidates} (``\emph{Erlesene}'') as those candidates 
in a constituency with maximal voting score (\S8,~298,~[2]).
The representative is chosen among these choice candidates and non-choice candidates need to transfer their votes to one of the choice candidates.
This is to ensure that the votes cast on non-choice candidates are not lost or too much scattered to be of any effect. 
Frege claims that the number of twenty-five choice candidates should suffice for all non-choice candidates to find a politically like-minded choice candidate to transfer their votes to (\emph{Erl\"auterungen},~303,~[10]).
Frege also provisions for the transfer of votes from a choice candidate to a deputy should the former die or otherwise lose his status as a choice candidate (\S11, 298, [3]).
In what follows, however, we will not make this distinction between choice and non-choice candidates.

Given the objectives of Frege's voting method, a key issue is how often a candidate can be expected to be elected as representative given his support in the constituency. 
Even though Frege does not seem to pursue \emph{proportional representation} (``\emph{Proportions\-wahl}'') as a political aim in itself, he states without formal proof that:
``If the strengths of political directions in a constituency remain the same over a prolonged period of time,
then the number of times during which each of these directions is represented 
by the representative of the constituency will behave approximately 
proportionally to these strengths'' \cite[][{302,~[8]}]{frege:2000a}.%
\footnote{The original German reads: ``\emph{Wenn die Richtungen in einem Wahlkreise l\"angere Zeit hindurch die\-selben St\"arken behalten, werden sich die Zeiten, w\"ahrend deren die einzelnen durch den Abgeordneten des Wahlkreises vertreten werden, ann\"ahernd wie diese St\"arken verhalten}''. Translation by the authors.}
The main goal of our paper is to investigate in a mathematically rigorous manner whether Frege's claim can be vindicated.

\subsection{Contributions of this paper}

Frege's intended audience presumably being politicians rather than academics,
Frege neither provided a formal definition nor a formal analysis of his voting method. 
We aim to fill this gap and, in Section~\ref{sec:formulation}, start by formulating his proposal in the precise mathematical framework of modern social choice theory. 
We first observe that Frege's method fails to be \emph{proportional} in the strict sense even under the assumption that the voters' number and preferences remain constant over time:
examples are easily found in which at some point a candidate is elected more often (and another less often) than would be justified by the number of votes he received as a proportion of the total of votes cast (Example~\ref{ex:51111}).
We show, however, that this failure of proportionality can be attributed to the peculiarity of Frege's method that the cost of being chosen representative varies over time. 
Even more, the variation of this cost creates a massive, disproportional advantage for strong candidates who pay a lower cost than minority candidates.

In Section~\ref{sec:prop-orig}, we can nevertheless prove that the cost of winning will converge and stabilize at the number of voters after a finite number of elections--provided that the size of the constituency remains constant (Lemma~\ref{lem:aggaggscore_leq_nm}). 
Frege appears to have been aware of this phenomenon, as he now and then speaks of a stable state (``\emph{Dauerzustand}'') in this context (302,~[8]).
This convergence of the cost of winning can take a very long time, as is also indicated by the two examples Frege himself constructed  to illustrate his procedure (\emph{Erl\"auterungen}, 300--1,~[5]--[7]; \emph{Tafel~I} and \emph{Tafel~II}). 
In both of the examples, the cost of winning can be shown to stabilize only after 184 elections. As Frege proposes that elections are held every five years, this process will take 920 years.

As soon as the cost of winning has stabilized, however, the behaviour of Frege's voting method becomes more favorable.
We can show that, as time proceeds, the proportion of times a candidate is chosen will converge towards the proportion of the candidate's support in the electorate.
In other words, Frege's method achieves proportionality in the limit (Theorem~\ref{thm:frege_asymp}).
Again, this result requires that the size of the electorate remains constant and that the average number of votes each candidate receives converges.
In this context, it is interesting that Frege explicitly expresses the desirability of constituencies being of about equal size and their composition changing as little as possible (\emph{Vorbemerkung}, 197,~[1]). 
If instead the size of the electorate is not assumed to be constant but can grow over time, we give an example showing that this convergence to proportionality does no longer hold (Example~\ref{ex:exp-n}).

The problems that come with the long initialisation phase before the cost of winning stabilizes suggest a modified version of Frege's voting method which we present in Section~\ref{sec:stronger}.
For this modified version, the number of votes cast for each candidate is normalized to lie between~$0$ and~$1$, and the cost of winning is invariably~$1$.
The latter stipulation intuitively corresponds to the cost of winning being equal to the number of voters.
The modified version of Frege's voting method is well-behaved immediately and has stronger proportionality guarantees over time than Frege's original method. 
In particular, we prove that, at any point in time, the number of times each candidate has been chosen lies within a bounded margin from his or her proportional share of votes aggregated up to that time (Theorems~\ref{thm:upperquota} and~\ref{thm:lower-quota}).

Frege's original method was conceived for the election of a single representative in a constituency
and rests on the liberal principle of voting for individual candidates (``\emph{Pers\"onlich\-keits\-wahl}'') instead of a party-list system (``\emph{Listenwahl}'').
That is, the method should guarantee fair representation of citizens' opinions in parliament
rather than reflect the strength of political parties.
Accordingly, it would be inappropriate to represent Frege as making a case for proportional representation as such. 
Nevertheless, the modified Frege method, as proposed in this paper, can naturally be interpreted as an \emph{apportionment method} as is commonly used for assigning seats to parties in a political assembly in systems of proportional representation.
This apportionment method, which we introduce in Section~\ref{sec:apportionment} and refer to as the \emph{Frege's apportionment method}, assigns to each political party as many seats as the number of times it would be elected as representative if the modified Frege method were run for so many times as there are seats in the assembly (parliament) while keeping the electorate fixed.

We can show that Frege's apportionment method is not mathematically equivalent to any of the methods that are common in the literature.
In particular, we demonstrate that it differs from the Adams method,
the D'Hondt (or Jefferson) method, the quota method, the Sainte-Lagu\"e (or Webster) method, the largest remainder method, and the Huntington-Hill method.
Analysing its compliance  with the customary axioms for apportionment, we prove that 
Frege's apportionment method satisfies house monotonicity 
and upper quota but fails population monotonicity. Lower quota is only satisfied if the number of candidates is at most three (Theorem~\ref{thm:apportionment-axioms}).
As an apportionment method, it therefore behaves surprisingly well, especially given the fact that it was not designed as one.
Only the quota method by \citet{balinski1975quota}  satisfies all of the axioms mentioned above that are also satisfied by Frege's method.
The quota method, however, is notoriously biased against small parties. 
Frege's method fares considerably better in this respect, as suggested by a numerical experiment in which we compare the number of votes per representative of the largest party and the number of votes per representative of the smallest party (Section~\ref{subsec:bias}).  
The conclusion seems to be fair that Frege's apportionment method is an interesting and novel addition to the apportionment literature.

Further discussions of Frege's proposals follow at the end of the paper in Section~\ref{sec:disc}. 
So as not to interrupt the flow of the argument, mathematical proofs are deferred to the appendix.
An open-source Python implementation of Frege's voting rule and our modified version is available online [reference omitted for reasons of anonymity].

\subsection{Related work}

Frege's voting method is difficult to compare with other voting rules due to its temporal nature, a feature that voting rules typically do not possess. Thus, we only briefly review some works in social choice theory that combine voting and a temporal structure.
Formalisms such as iterative voting~\citep{Meir2017IterativeVoting} and
dynamic social choice \citep{Tennenholtz2004Transitive,Boutilier2012Dynamic,Parkes2013Dynamic} consider voting scenarios with changing (dynamic) preferences.
In contrast to Frege's proposal, these essentially concern a single election where preferences are updated over time. In particular, they are not concerned with proportional outcomes over time.

Another line of work \citep{conitzer2017fair,Freeman2017Fair,Lackner2020} is concerned with repeating elections, similar to Frege's proposal.
These works, however, focus on fairness towards voters and discuss mechanisms that guarantee a fair distribution of utility among voters over time.
From this point of view, these works can be viewed as an orthogonal approach to the one of Frege, where all emphasis is put on fairness towards candidates and individual voters are not taken into account. We return in Section~\ref{sec:disc} to this partial disregard of voters' current preferences.

Finally, the storable votes method~\citep{casella2005storable,casella2012storable} is a voting rule based on plurality voting.
In each election, voters can decide to either cast a vote or to transfer their vote weight to future elections. If a voter decides to cast a vote, she can spend all, some, or none of her stored weight from previous elections.
This rule vaguely resembles Frege' proposal as it allows minority candidates to win at some point, however only if they have supporters that strategically act to their benefit.
This kind of strategic voting is not required with Frege's proposal; Frege even made his proposal with the intention of reducing strategic voting.

\section{Mathematical Formulation of Frege's Voting Method}\label{sec:formulation}

Frege couched his voting law in legal terms, and also his subsequent discussions are mathematically informal.
In this section we provide a mathematical formulation of Frege's voting law, where we concentrate on the voting mechanism Frege proposed. 
In particular we focus on the way candidates accumulate votes over the course of multiple elections and whether this leads to a fair (proportional) representation of opinions over time.
We will make the simplifying assumption that candidates remain the same over time, 
but we do allow voters to change their number, their identity, and opinion as to their most preferred candidate, unless stated otherwise.
In particular, we will disregard Frege's distinction between choice candidates (``\emph{Erlesene}'') and non-choice candidates, and the delegation mechanism it enables.
Thus, in our analysis, every candidate accumulates votes.

Let~$C$ be a set of $m\ge 2$ \emph{candidates}.
Voting proceeds in rounds over time, meaning that at every point in time $t \ge 1$ an election takes place and a new \emph{representative}~$\representative(t)$ is chosen.
% We assume the set of candidates to remain the same over time.
% We make no such assumption for the voters and
We have~$n_t$ denote the \emph{number of voters} participating in the election at time~$t$.
In every election, the voters submit their preferences by means of \emph{plurality ballots}, \ie, for every $t\ge 1$, the voters specify their most preferred candidate only. 
We denote by~$\pluralityscore_j^t$ the \emph{plurality score} of candidate~$j$ at time~$t$, \ie, the number of voters that put~$j$ on their ballot at time~$t$.
Since every and only participating voters cast votes on candidates, we thus have $n_t = \sum_{j\in C} \pluralityscore_j^t$.
We speak of a \emph{fixed electorate} if the number of voters and the candidates' plurality scores remain constant over time, \ie, if $n_t=n$ and $\pluralityscore_j^t=\pluralityscore_j$ for all $j\in C$ and $t\ge 1$. 

In each round $t$ of the election process, an \textit{aggregate score}~$\aggscore_j^t$ is calculated for every candidate~$j$ on the basis of the scores obtained at the time of the election ($\pluralityscore_j^t$) \emph{and the scores obtained in past elections}. The candidate which obtains the maximal aggregate score at time~$t$ is chosen as representative~\emph{for round~$t$}, that is, $\representative(t)=\argmax_{j\in\candidates}\aggscore_j^t$.
In case of a tie, the candidate that is lexicographically first is chosen. 
This is equivalent to assuming a fixed tie-breaking order, for instance, by breaking ties in favour of the oldest candidate, as suggested by Frege himself.
Formally, we define the \emph{aggregate score~$\aggscore_j^t$} of a candidate~$j$ at time~$t$ inductively such that, for every
 $t\ge 1$,
\begin{align*}
	\aggscore_j^1	&	=	\pluralityscore_j^1
	\\
	\aggscore_j^{t+1}	&	=	
	\begin{cases}
		\rule[-1em]{0pt}{2em}\aggscore_j^t+\pluralityscore_j^{t+1}-\left\lfloor{{\frac{1}{m}\cdot\sum_{k\in\candidates}\aggscore_k^t}}\right\rfloor	&\text{if $\representative(t)=j$,}\\
		\rule[0em]{0pt}{0em}\aggscore_j^t+\pluralityscore_j^{t+1}																&\text{otherwise.}	
	\end{cases}
\end{align*}
The term $\left\lfloor{{\frac{1}{m}\cdot\sum_{k\in\candidates}\aggscore_k^t}}\right\rfloor$ can be seen as the \emph{cost of winning} the election at time~$t$, as it is later subtracted from the aggregate score of the winning candidate. 
Note that this number is chosen in such a way that the aggregate scores of all candidates are guaranteed to remain non-negative at all times.
Furthermore, observe that the aggregate scores at time~$t$ are used to elect the representative at time~$t$ and consequently only include the costs of winning of previous rounds and not of time~$t$.

\begin{example}\label{example:ex1}
Consider a fixed electorate with three candidates and ten voters, and let the corresponding plurality scores of candidates $a$, $b$, and $c$ be~$5$,~$3$, and~$2$,~respectively. 
Table~\ref{tab:ex:1} depicts the values of $\aggscore_j^{t}$ for $t=1, \ldots, 10$. Maximum aggregate scores of each round are printed in bold. 
\begin{table}
{\setlength{\arraycolsep}{6pt}
\[
\begin{array}{crrrcc}
\toprule
\mathrm{time}~t&
\mathwordbox[r]{\aggscore_a^t}{15}	&
\mathwordbox[r]{\aggscore_b^t}{15}	&
\mathwordbox[r]{\aggscore_c^t}{15}	&
\representative(t) & \left\lfloor{{\frac{1}{3}\cdot\sum_{k\in\candidates}\aggscore_k^t}}\right\rfloor
\\\midrule
1 & \mathbf{5} & 3 & 2 & a & 3\\
2 & \mathbf{7} & 6 & 4 & a & 5\\
3 & 7 & \mathbf{9} & 6 & b & 7\\
4 & \mathbf{12} & 5 & 8 & a & 8\\
5 & 9 & 8 & \mathbf{10} & c & 9\\
6 & \mathbf{14} & 11 & 3 & a & 9\\
7 & 10 & \mathbf{14} & 5 & b & 9\\
\cdashlinelr{1-6}
8 & \mathbf{15} & 8 & 7 & a & 10\\
9 & 10 & \mathbf{11} & 9 & b & 10\\
10 & \mathbf{15} & 4 & 11 & a & 10\\
\bottomrule
\end{array}
\]
}
\caption{A simple example of Frege's method (Example~\ref{example:ex1}). Maximum aggregate scores are printed in bold.}\label{tab:ex:1}
\end{table}
At time~$1$, candidate~$a$ is chosen, because~$a$ has a maximum aggregate score of~$5$. 
At time~$2$, each candidate keeps the votes he had obtained at time~$1$ plus the votes obtained at time~$2$, which we assumed to be the same as at~time~$1$. 
Thus for candidates~$b$ and~$c$ the aggregate scores at~time~$2$ are $3+3=6$ and $2+2=4$, respectively.
The cost of winning incurred by candidate~$a$ at time~$1$ amounts to~$3=\lfloor \nicefrac{10}{3}\rfloor$, which has to be subtracted from the number of votes candidate~$a$ received at time~$1$ and time~$2$.
Accordingly, candidate~$a$'s aggregate score $\aggscore_a^2$ at time~$2$ is calculated as $5+5-3=7$.
Hence, at time~$2$, candidate~$a$ again has the highest aggregate score and is elected representative another time.  
The cost of winning at time~$2$, however, has increased to~$5$, and at time~$3$, it is candidate~$b$ who has the maximum aggregate score and is elected representative, this time at a cost of~$7$. And so on.
Note that the cost of winning increases over time, starting with~$3$ and increasing to~$10$, the number of voters.
Once the cost of winning has reached~$10$---indicated in the table by the dashed line---it stabilizes and remains constant at all subsequent time steps.
\end{example}

The increasing cost $\left\lfloor{{\frac{1}{m}\cdot\sum_{k\in\candidates}\aggscore_k^t}}\right\rfloor$ of winning, as we saw in Example~\ref{example:ex1}, suggests an unfairness inherent in Frege's voting method.
Early winners, \ie, those candidates with the highest plurality scores, incur lower costs for being elected than those candidates that win later.
This makes it advantageous to win early in the election process,
creating a positive bias towards strong candidates and accordingly constitutes a disadvantage for minority candidates. 
This is also reflected in these candidates being elected more often than would seem to be justified by the proportion of the electorate that supports them. 
This phenomenon is all the more remarkable as it was Frege's intention to also strengthen minority opinions (\citealt[][{\emph{Erl\"auterungen}, 306,~[15]}]{frege:2000a}; \citealt[][page~292]{gabriel_dathe:2000a}).
How extreme this distortion can be is illustrated by the following example.

\begin{example}
\label{ex:51111}
Let us consider a scenario that highlights the unfairness introduced by increasing costs (Table~\ref{tab:ex:2}). We have a fixed electorate with 6 candidates and 10 voters. The corresponding plurality scores~$\pi_j$ of the candidates~$a$ through~$f$ are $1,1,1,1,1$, and~$5$, respectively. 
The table below depicts the values of $\aggscore_j^{t}$ for $t=1, \ldots, 10$ and $j\in C$.
Ties occur in rows with more than one element printed in bold.
We assume that ties are broken in alphabetical order and thus always to the disadvantage of~$f$.
\begin{table}
{\setlength{\arraycolsep}{6pt}
\[
\begin{array}{crrrrrrcc}
\toprule
\mathrm{time}&
\mathwordbox[r]{a}{15}	&
\mathwordbox[r]{b}{15}	&
\mathwordbox[r]{c}{15}	&
\mathwordbox[r]{d}{15}	&
\mathwordbox[r]{e}{15}	&
\mathwordbox[r]{f}{15}	&
\mathrm{representative} & \left\lfloor{{\frac{1}{6}\cdot\sum_{k\in\candidates}\aggscore_k^t}}\right\rfloor\\\midrule
1 & 1 & 1 & 1 & 1 & 1 & \mathbf{5} & f & 1\\
2 & 2 & 2 & 2 & 2 & 2 & \mathbf{9} & f & 3\\
3 & 3 & 3 & 3 & 3 & 3 & \mathbf{11} & f & 4\\
4 & 4 & 4 & 4 & 4 & 4 & \mathbf{12} & f & 5\\
5 & 5 & 5 & 5 & 5 & 5 & \mathbf{12} & f & 6\\
6 & 6 & 6 & 6 & 6 & 6 & \mathbf{11} & f & 6\\
7 & 7 & 7 & 7 & 7 & 7 & \mathbf{10} & f & 7\\
8 & \mathbf{8} & \mathbf{8} & \mathbf{8} & \mathbf{8} & \mathbf{8} & \mathbf{8} & a & 8\\
9 & 1 & 9 & 9 & 9 & 9 & \mathbf{13} & f & 8\\
10 & 2 & \mathbf{10} & \mathbf{10} & \mathbf{10} & \mathbf{10} & \mathbf{10} & b & 8\\
\bottomrule
\end{array}
\]}
\caption{The unfairness of increasing costs: candidate~$f$ wins unproportionally often (Example~\ref{ex:51111})}\label{tab:ex:2}
\end{table}
After ten rounds, candidate~$f$ has been chosen eight times, candidates~$a$ and~$b$ have been chosen once, and candidates~$c$, $d$ and~$e$ not at all. 
This shows that Frege's voting method does not select representatives in a proportional fashion: in ten rounds it would be possible to perfectly reflect the distribution of votes, \ie, by choosing candidate~$f$ five times and all other candidates once.
\end{example}

At this point, we would like to make a minor remark concerning Frege's use of the floor function in the definition of the cost of winning as $\left\lfloor{{\frac{1}{m}\cdot\sum_{k\in\candidates}\aggscore_k^t}}\right\rfloor$ in his voting method (cf.~$\aggscore_j^{t+1}$, as defined above).\footnote{Frege writes: ``A remainder that is smaller than the number of choice candidates of a constituency, is left disregarded'' (``\emph{Ein Rest, der kleiner als die Anzahl der Erlesenen des Wahlkreises ist, bleibt dabei unber\"ucksichtigt}'') (\S14,~299,~[4], English translation by the authors).}
This was most likely motivated by considerations of numerical simplicity, 
but, for our purposes, does not have a significant mathematical effect.
Replacing the term $\left\lfloor{{\frac{1}{m}\cdot\sum_{k\in\candidates}\aggscore_k^t}}\right\rfloor$ with ${\frac{1}{m}\cdot\sum_{k\in\candidates}\aggscore_k^t}$ does not help to resolve any of the issues identified in Example~\ref{ex:51111}.

In Example~\ref{example:ex1}, we also saw that after time~$7$, the cost of winning stabilizes at~$10$. 
After this time, therefore, candidates no longer gain an advantage by being elected earlier rather than later, at least not with respect to the cost of winning. 
This phenomenon is not specific to Example~\ref{example:ex1};
also in Example~\ref{ex:51111} the cost of winning will stabilize at time~$17$, when it reaches a cost of $10$.
Rather, as long as the \emph{size} of the electorate remains constant, the convergence of the cost of winning to the number of voters will hold generally (Lemma~\ref{lem:aggaggscore_leq_nm}, below). 
Under further conditions on how the opinions in the electorate evolve over time, moreover, the unfairness caused by the varying costs of winning in the initial phase will taper off and result in a proportional representation in the long run. 
We formally prove this in the next section.

The unfairness towards minority candidates caused by the increasing costs of winning suggests a variation of Frege's method where the cost of winning is stipulated to be constant from the outset.
We introduce the \emph{modified Frege method} in Section~\ref{sec:stronger} and show that it not only guarantees proportionality in the long run, but also has stronger proportionality properties, which we will make formally precise.

\section{Proportionality Guarantees for Frege's Method}\label{sec:prop-orig}

In the previous section we saw how the cost of winning converges as long as the number of voters is fixed. We now make this observation mathematically precise.
\newcommand{\lemaggaggscoreleqnm}{Assume that the number of voters is fixed, that is, $n_t=n$ for all $t\ge 1$.
Then the function $a(t)=\sum_{k\in\candidates}\aggscore_k^t$ is monotonically increasing. Moreover, there exists a positive number $t_0$ such that $a(t)=n\cdot m$ for all $t\ge t_0$.}
The proof of this lemma as well as all further proofs can be found in the appendix.
\begin{lemma}
\lemaggaggscoreleqnm\label{lem:aggaggscore_leq_nm}
\end{lemma}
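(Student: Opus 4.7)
The plan is to derive a clean one-step recursion for $a(t)$, prove $a(t)\le mn$ by induction, and then exploit that bound to conclude both monotonicity and convergence to the fixed value $mn$.

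First I would unfold the definition of $\aggscore_j^{t+1}$ and sum over $j\in\candidates$. Since exactly one candidate pays the cost of winning at time~$t$, we get the recursion
\[
a(t+1) \;=\; a(t) + \sum_{j\in\candidates}\pluralityscore_j^{t+1} - \left\lfloor\tfrac{a(t)}{m}\right\rfloor \;=\; a(t) + n - \left\lfloor\tfrac{a(t)}{m}\right\rfloor,
\]
using the fixed-electorate assumption $\sum_j \pluralityscore_j^{t+1}=n$. Note that $a(t)$ is a non-negative integer and $a(1)=n$.

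Next, I would prove by induction on $t$ that $a(t)\le m\cdot n$. The base case $a(1)=n\le mn$ is immediate since $m\ge 2$. For the inductive step, write $a(t)=mn-k$ with $k\ge 0$; then $\lfloor a(t)/m\rfloor = n-\lceil k/m\rceil$, so
\[
a(t+1) \;=\; mn - k + n - \bigl(n-\lceil k/m\rceil\bigr) \;=\; mn - k + \lceil k/m\rceil \;\le\; mn,
\]
because $\lceil k/m\rceil \le k$ for every $k\ge 0$. This bound immediately gives $\lfloor a(t)/m\rfloor \le n$, and hence $a(t+1)-a(t) = n - \lfloor a(t)/m\rfloor \ge 0$, establishing that $a$ is monotonically (non-strictly) increasing.

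For the stabilization claim, I would observe from the recursion that $a(t+1)=a(t)$ if and only if $\lfloor a(t)/m\rfloor = n$, which combined with the bound $a(t)\le mn$ forces $a(t)=mn$. Conversely, as long as $a(t)<mn$, we have $a(t)\le mn-1$ and hence $\lfloor a(t)/m\rfloor \le n-1$, so $a(t+1)\ge a(t)+1$. Thus the integer-valued sequence strictly increases by at least one in each step until it reaches the fixed point $mn$, which it does after at most $(m-1)n$ steps; from that point on it remains equal to $mn$, yielding the desired $t_0$. The only mildly delicate part of the argument is the inductive bound $a(t)\le mn$, where the interaction with the floor function has to be handled carefully; everything else follows from the monotone-convergence style argument for integer-valued sequences.
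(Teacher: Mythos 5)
Your proof is correct, and it takes a genuinely different route from the paper's. The paper first establishes that $a(t)$ eventually reaches $nm$ from below by comparing it with the explicitly solvable real-valued recursion $b(t+1)=b(t)+n-\tfrac{1}{m}b(t)$, whose solution $b(t)=nm\bigl(1-\bigl(\tfrac{m-1}{m}\bigr)^t\bigr)$ converges to $nm$ and satisfies $b(t)\le a(t)$; only then does it verify, via the computation $\lfloor (nm-i)/m\rfloor=n-\lceil i/m\rceil$, that the first value $\ge nm$ ever attained is exactly $nm$, and it derives monotonicity last. You instead prove the invariant $a(t)\le mn$ directly by induction (using essentially the same floor identity in the inductive step), deduce monotonicity immediately from $\lfloor a(t)/m\rfloor\le n$, and replace the comparison sequence by the observation that an integer sequence satisfying $a(t+1)\ge a(t)+1$ whenever $a(t)<mn$ must hit the fixed point $mn$ within at most $(m-1)n$ steps. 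Your argument is more elementary and self-contained, and it yields an explicit (if crude) bound $t_0\le (m-1)n+1$; the paper's comparison sequence, by contrast, exposes the geometric rate of convergence, which is what underlies the observation that Frege's own examples (with $n=1000$ and $m=25$) stabilize only at $t_0=184$. Both arguments establish exactly the statement of the lemma.
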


This convergence process of the cost of winning (which is $\lfloor \frac{a(t)}{m} \rfloor$) to the number of voters can take quite a long time:
In Frege's proposal, he provided two explanatory examples with $n=1000$ and $m=25$.
With these parameters, Frege's voting method reaches a constant cost of winning at time
$t_0=184$. As Frege proposes that elections are held every five years, this would amount to 920 years.

Let us continue by specifying in which sense Frege's voting method violates even a most basic form of proportionality, as noted in Example~\ref{ex:51111}.
In what follows, let $\rho_j(t)$ denote the number of times candidate~$j$ is chosen as representative up until time~$t$, \ie, $\rho_j(t)=|\left\{s \le t: \representative (s)=j\right\}|$.
Under the assumption of a fixed electorate, after $t$ rounds, each candidate~$j$ should ideally win $t\cdot\frac{\pluralityscore_j}{n}$ times, \ie, the number of rounds multiplied by the proportion of the electorate that supports candidate~$j$.
If $t\cdot\frac{\pluralityscore_j}{n}$ is an integer for all $j$, a perfectly proportional outcome is possible.
This observation gives rise to the following definition.

\begin{definition}\label{def:integral-quota}
A fixed electorate with plurality scores $(\pluralityscore_j)_{j\in C}$ has \emph{integral quotas at time~$t$} if $t\cdot\frac{\pluralityscore_j}{n}$  is integral for all candidates~$j$.
We say that an (infinite) sequence of chosen representatives $(\representative(1),\representative(2),\dots)$ satisfies \emph{variable integral quota} if for any time~$t\ge 1$ at which the electorate has integral quotas,
it holds that 
\begin{align*}
\rho_j(t) = t\cdot\frac{\pluralityscore_j}{n}
\quad\text{ for every $j\in C$}.
\end{align*}
\end{definition}

Example~\ref{ex:51111} shows that Frege's method cannot guarantee this property: a sequence of representatives chosen by Frege's method may violate variable integral quota.
Note that variable integral quota applies to rather few electorates, namely only fixed electorates considered at time points when a perfectly proportional outcomes is possible.
Hence, we consider variable integral quota as a weak and very basic form of proportionality.
However, the following theorem shows under which conditions $\frac{\rho_j(t)}{t}$ will converge to $\frac{\pluralityscore_j}{n}$. Thus, Frege's method provides a form of proportionality \emph{in the long run}.

\newcommand{\thmfregeasymp}{If we assume a fixed electorate,  that is, $n_t = n$ and $\pluralityscore_j^t = \pluralityscore_j$ for all $t\ge 1$, the following holds for Frege's voting method:
\[
\lim_{t \to \infty} \frac{\rho_j(t)}{t} = \frac{\pluralityscore_j}{n}.
\]
If $n_t = n$ and for all $t\ge 1$ and for all $j \in C$ there is some $\pluralityscore_j^* \in [0,1]$ such that
\[ 
\lim_{t \to \infty} \frac{\sum_{s=1}^t\pluralityscore_j^s}{t} = \pluralityscore_j^*, 
\]
then the following holds for all candidates $j\in C$:
\[
\lim_{t \to \infty} \frac{\rho_j(t)}{t} = \frac{\pluralityscore_j^*}{n}.
\]}

\begin{theorem}
\thmfregeasymp
\label{thm:frege_asymp}
\end{theorem}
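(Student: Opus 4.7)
The plan is to reduce the first claim to the second (take $\pluralityscore_j^s = \pluralityscore_j$ for every~$s$, so that the hypothesis holds with $\pluralityscore_j^* = \pluralityscore_j$) and to derive the second directly from Lemma~\ref{lem:aggaggscore_leq_nm}. Under the assumption $n_t = n$, that lemma supplies some $t_0$ with $\sum_{k\in\candidates}\aggscore_k^t = n\cdot m$ for every $t\ge t_0$, so the cost of winning $\lfloor \tfrac{1}{m}\sum_{k\in\candidates}\aggscore_k^t\rfloor$ equals \emph{exactly} $n$ from time $t_0$ onwards, and each $\aggscore_j^t$ is uniformly bounded by $nm$.

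Next I would telescope the recursion for $\aggscore_j^t$ from $t_0$ up to an arbitrary $T > t_0$. Each round adds $\pluralityscore_j^s$, and each round in which candidate $j$ is elected subtracts exactly $n$, so
\[
\aggscore_j^T \;=\; \aggscore_j^{t_0} + \sum_{s=t_0+1}^{T}\pluralityscore_j^s \;-\; n\bigl(\rho_j(T)-\rho_j(t_0)\bigr) + \varepsilon_T,
\]
where $\varepsilon_T$ is an off-by-one correction of magnitude at most~$n$ coming from aligning the subtraction indices $\{t_0,\dots,T-1\}$ with the counting set $\{t_0+1,\dots,T\}$ used in $\rho_j$. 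Isolating $\rho_j(T)$, dividing by $T$, and rewriting the remaining sum as a cumulative average over the full history gives
\[
\frac{\rho_j(T)}{T} \;=\; \frac{1}{n}\cdot\frac{1}{T}\sum_{s=1}^{T}\pluralityscore_j^s \;+\; \frac{\aggscore_j^{t_0} - \aggscore_j^T + \varepsilon_T}{nT} \;+\; \frac{C_j}{T},
\]
where $C_j$ is a constant depending only on the finite history up to time $t_0$ (the initial count $\rho_j(t_0)$ together with the contribution of $\sum_{s=1}^{t_0}\pluralityscore_j^s$ shifted out of the first summand).

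Taking $T\to\infty$, the third summand vanishes trivially, while the middle summand vanishes because $|\aggscore_j^{t_0}-\aggscore_j^T|\le nm$ and $|\varepsilon_T|\le n$ are both bounded. By hypothesis the first summand tends to $\pluralityscore_j^*/n$, which delivers the limit asserted in the theorem. The argument thus collapses onto Lemma~\ref{lem:aggaggscore_leq_nm}: once the cost of winning is known to stabilise at the exact value $n$, the recursion becomes a strict conservation law between cumulative plurality support and the number of wins, and the rest is routine asymptotic analysis. The main point worth checking is that Lemma~\ref{lem:aggaggscore_leq_nm} requires only $n_t=n$ and \emph{not} the constancy of individual plurality scores, which is precisely what makes it applicable under the weaker averaging hypothesis of the second claim.
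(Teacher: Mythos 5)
Your proposal is correct and follows essentially the same route as the paper: the paper packages your telescoped identity as a separate technical lemma (Lemma~\ref{lem:eqn_aggscore}), writing $\aggscore_j^{t+1}=\sum_{s=1}^{t+1}\pluralityscore_j^s - c_j - n(\rho_j(t)-\rho_j(t_0))$ with an exact constant $c_j$ where you carry a bounded correction $\varepsilon_T$, and then divides by $t$ and lets the bounded terms vanish exactly as you do. Your observation that Lemma~\ref{lem:aggaggscore_leq_nm} needs only $n_t=n$ (not constant individual scores) is indeed the point that makes the second, averaged claim go through.
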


The following proposition shows that, for a fixed electorate, proportionality is not only guaranteed in the limit but eventually also within (finite) intervals.
\newcommand{\propperiodic}{If we assume a fixed electorate with $n$ voters, there exists a time $t^* \ge 1$ and a period length $P \in \mathbb{N}$ such that, for all $t \ge t^*$  and all $j \in C$,
\[
\frac{\rho_j(t + P) - \rho_j(t)}{P} = \frac{\pi_j}{n}\text.
\]}
\begin{proposition}
\propperiodic
\label{prop:periodic}
\end{proposition}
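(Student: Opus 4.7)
The plan is to exploit the fact that, once the cost of winning has stabilized, Frege's voting method becomes a deterministic dynamical system on a finite state space, whence its orbit is eventually periodic; the identity claimed in the proposition then follows from a one-line conservation argument on the aggregate scores across one full period.

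First I would invoke Lemma~\ref{lem:aggaggscore_leq_nm} to fix a time $t_0$ with $a(t) = n\cdot m$, and hence $\lfloor a(t)/m\rfloor = n$, for all $t \ge t_0$. Next I would observe that for $t \ge t_0$ the score vector $(\aggscore_1^t,\dots,\aggscore_m^t)$ lies in the finite set $S = \{x \in \mathbb{Z}_{\ge 0}^m : x_1 + \cdots + x_m = n\cdot m\}$, and that under the fixed-electorate assumption the update rule together with the prescribed lexicographic tie-breaking turns $(\aggscore_j^{t+1})_j$ into a deterministic function of $(\aggscore_j^t)_j$ alone. By the pigeonhole principle there exist $t^* \ge t_0$ and $P \ge 1$ such that $\aggscore_j^{t^*} = \aggscore_j^{t^*+P}$ for every $j \in C$, and a straightforward induction then yields $\aggscore_j^{t+P} = \aggscore_j^t$ and $\representative(t+P) = \representative(t)$ for all $t \ge t^*$.

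With $w_j$ denoting the number of times $j$ is elected during one period starting at $t^*$, I would then sum the recursive definition of $\aggscore_j^{t+1}$ from $t = t^*$ to $t = t^*+P-1$. Since the cost of winning equals $n$ throughout this range and $\pluralityscore_j^t = \pluralityscore_j$, this yields
\begin{equation*}
0 \;=\; \aggscore_j^{t^*+P} - \aggscore_j^{t^*} \;=\; P\,\pluralityscore_j - n\,w_j,
\end{equation*}
so $w_j = P\pluralityscore_j/n$. For arbitrary $t \ge t^*$, the window of time points in $(t,\,t+P]$ spans exactly one period of the eventually periodic sequence $(\representative(s))_s$ and therefore contains exactly $w_j$ victories of $j$; consequently $\rho_j(t+P) - \rho_j(t) = w_j = P\pluralityscore_j/n$, which is the claim.

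The only slightly delicate ingredient is the passage from periodicity of the state to periodicity of the entire downstream history: one must confirm that all $\aggscore_j^t$ remain non-negative integers (which is built into Frege's choice of the cost of winning and ensures finiteness of $S$) and that the tie-breaking rule depends only on the current scores, so that revisiting a state forces identical subsequent behaviour. Once this is in place, the proposition reduces to the conservation identity above, and no further quantitative estimates are required.
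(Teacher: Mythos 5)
Your proposal is correct and follows essentially the same route as the paper: Lemma~\ref{lem:aggaggscore_leq_nm} bounds the aggregate scores, finiteness of the state space plus determinism of the update (including tie-breaking) gives eventual periodicity, and a telescoping/conservation identity over one period yields $\rho_j(t+P)-\rho_j(t)=P\pi_j/n$. The only cosmetic difference is that the paper extracts the final identity from its pre-established formula for $\aggscore_j^{t+1}$ (Lemma~\ref{lem:eqn_aggscore}) rather than summing the recursion directly over a period, but the underlying argument is identical.
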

If we do not assume a fixed number of voters, proportionality cannot be guaranteed even as $t\to\infty$. 
To see this, consider the following example.

\begin{example}\label{ex:exp-n}
We consider a scenario with two candidates, $a$ and $b$. For $t=1$, there are three voters and $\pi_a^1=2$, $\pi_b^1=1$. In every following round the number of voters is doubled and the ratio 
${\pi_a^t}/{\pi_b^t}$
 remains the same, namely~$2$.
We thus have $\pi_a^t= 2^{t}$ and $\pi_b^t=2^{t-1}$. 
As we will see, candidate~$b$ never wins despite receiving one third of the votes.

In order to see this, let us prove that $\aggscore_a^t > \aggscore_b^t$ for all $t \ge 1$ by induction over $t$. The basis is clearly fulfilled since $\aggscore_a^1 = \pi_a^1=2 > 1 = \pi_b^1 =\aggscore_b^1$.
For the induction step, we assume that $\aggscore_a^s > \aggscore_b^s$ for all $s \le t$, \ie, candidate $a$ has always won so far. We thus have:
\begin{align*}
\aggscore_a^{t+1} & = \aggscore_a^t + 2^{t+1} - \left\lfloor\frac{\aggscore_a^t + \aggscore_b^t}{2} \right\rfloor \quad\text{ and }\quad
\aggscore_b^{t+1} = \aggscore_b^t + 2^{t}.
\end{align*}
Since $\aggscore_a^t > \aggscore_b^t$, it follows:
\[
\left\lfloor\frac{\aggscore_a^t + \aggscore_b^t}{2} \right\rfloor \le \frac{\aggscore_a^t + \aggscore_b^t}{2} <  \aggscore_a^t.
\]
Moreover, since $b$ has not been chosen so far,
\[
\aggscore_b^{t+1} = \sum_{s=1}^{t+1} \pi_b^s = \sum_{s=1}^{t+1} 2^{s-1} = 2^{t+1} -1.
\]
We thus have:
\begin{align*}
\aggscore_a^{t+1} & = \aggscore_a^t + 2^{t+1} - \left\lfloor\frac{\aggscore_a^t + \aggscore_b^t}{2} \right\rfloor 
 > \aggscore_a^t + 2^{t+1} - \aggscore_a^t = 2^{t+1}
 > \aggscore_b^{t+1}.
\end{align*}
Thus, candidate $b$ will never be chosen as representative.
\end{example}

In conclusion, Frege's method is not proportional for arbitrary time intervals, but converges to proportional outcomes if the size of the electorate is fixed.

\section{The Modified Frege Method}\label{sec:stronger}

The examples in Section~\ref{sec:formulation} and the results of Section~\ref{sec:prop-orig} point to an increasing cost of winning as the reason for Frege's original method failing a reasonable form of proportionality in the initial phase before the cost of winning has stabilized. 
This observation suggests that a natural variation of Frege's original method, for which the cost of winning is stipulated to be constant, might do better.
We thus introduce the following modification of Frege's method,  which we will refer to as the \emph{modified Frege method}.
In the formal definition of the modified Frege method, we abstract from the size of the electorate and accordingly use \emph{normalized plurality scores}~$p_j^t$ for candidates $j$ and times $t\ge 1$: 
\[
p_j^t = \frac{\pluralityscore_j^t}{n_t}\text,
\]
where~$n_t$ denotes the total number of voters at time~$t$.
The aggregate scores for the modified Frege method are defined as follows, where we use Latin letters to denote variables instead of Greek ones as in the definition of Frege's original method:
\begin{align*}
	s_j^1	
	&	=	p_j^1
	\\
	s_j^{t+1}	&	=	
	\begin{cases}
		\rule[-1em]{0pt}{2em}
		s_j^t+p_j^{t+1}-1	&\text{if $\representative(t)=j$,}\\
		s_j^t+p_j^{t+1}																&\text{otherwise.}
	\end{cases}
\end{align*}

One may wonder why the number~$1$ is subtracted from the winning candidate. The reason is that this choice ensures that the sum $\sum_{j\in C} s_j^t$ of aggregated scores is invariably~$1$ for all times~$t$.
Hence, this stipulation intuitively corresponds to the cost of winning being equal to the number of voters, as it is eventually the case for Frege's original method (cf.~Lemma~\ref{lem:aggaggscore_leq_nm}).
A potential disadvantage of the modified Frege method is that aggregated scores may become negative; we will further discuss this issue at the end of the paper in Section~\ref{sec:disc}.

\begin{table}[t]
{\setlength{\arraycolsep}{6pt}
\[
\begin{array}{crrrrrrc}
\toprule
\mathrm{time}&
\mathwordbox{a}{15}	&
\mathwordbox{b}{15}	&
\mathwordbox{c}{15}	&
\mathwordbox{d}{15}	&
\mathwordbox{e}{15}	&
\mathwordbox{f}{15}	&
\mathrm{representative} 
\\\midrule
1 &  0.1 & 0.1 & 0.1 & 0.1 & 0.1 & \mathbf{0.5} &f\\
2 &  \mathbf{0.2} & \mathbf{0.2} & \mathbf{0.2} & \mathbf{0.2} & \mathbf{0.2} & 0.0 & a\\
3 &  -0.7 & 0.3 & 0.3 & 0.3 & 0.3 & \mathbf{0.5} & f\\
4 &  -0.6 & \mathbf{0.4} & \mathbf{0.4} & \mathbf{0.4} & \mathbf{0.4} & 0.0 &b\\
5 &  -0.5 & -0.5 & \mathbf{0.5} & \mathbf{0.5} &\mathbf{0.5} & \mathbf{0.5} &  c\\
6 & -0.4 & -0.4 & -0.4 & 0.6 & 0.6 & \mathbf{1.0} & f\\
7 &  -0.3 & -0.3 & -0.3 & \mathbf{0.7} & \mathbf{0.7} & 0.5 & d\\
8 &  -0.2 & -0.2 & -0.2 & -0.2 & 0.8 & \mathbf{1.0} & f\\
9 & -0.1 & -0.1 & -0.1 & -0.1 & \mathbf{0.9} & 0.5 & e\\
10 &  0.0 & 0.0 & 0.0 & 0.0 & 0.0 & \mathbf{1.0} & f\\
\bottomrule
\end{array}
\]
}
\caption{Example illustrating the modified Frege method (Example~\ref{example:ex_modiefied})}
\label{table:ex_modified}
\end{table}

\begin{example}\label{example:ex_modiefied}
Let us now reconsider Example~\ref{ex:51111} for the  modified Frege method, as depicted in Table~\ref{table:ex_modified}.
The normalized plurality scores for candidates~$a$ through~$f$ are $0.1$, $0.1$, $0.1$, $0.1$, $0.1$, and~$0.5$, respectively.
We  now see that this method produces a proportional outcome: candidate~$f$ wins five times and all other candidates once, which is exactly in accordance with the candidates' proportional share of the votes.
\end{example}

The modified Frege method enjoys a number of proportionality properties that are stronger than the ones that can be proven for Frege's original method.
In the following, let  $r_j(t)$ denote the number of times candidate~$j$ is chosen as representative up to time~$t$, that is $r_j(t)=|\left\{s \le t: \representative (s)=j\right\}|$. 
First, we show that a similar statement to Theorem~\ref{thm:frege_asymp} holds for the  modified Frege method:
\newcommand{\thmmodfregeasymp}{If we assume 
 that $p_j^t=p_j$ for all $t \geq 1$, the following holds for the modified Frege method:  
\[
\lim_{t \to \infty} \frac{r_j(t)}{t} = p_j.
\]
If the normalized plurality scores $p_j^t$ are not fixed but for all $j \in C$ there is some $p_j^* \in [0,1]$ such that
\[ 
\lim_{t \to \infty} \frac{\sum_{s=1}^tp_j^s}{t} = p_j^*\text,
\]
then the following holds for all candidates $j\in C$:
\[
\lim_{t \to \infty} \frac{r_j(t)}{t} = \lim_{t \to \infty} \frac{\sum_{s=1}^t p_j^s}{t} = p_j^*.
\]}
\begin{theorem}
\thmmodfregeasymp\label{thm:mod_frege_asymp}
\end{theorem}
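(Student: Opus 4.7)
The plan is to reduce the theorem to two facts: an exact identity expressing $s_j^t$ as an excess of cumulative normalized vote share over the win counter $r_j(t)$, and a uniform (in $t$) bound on $|s_j^t|$. Concretely, by induction on $t\ge 0$ I would first prove
\[
s_j^{t+1} \;=\; \sum_{s=1}^{t+1} p_j^s \;-\; r_j(t) \qquad \text{for every } j \in C.
\]
The base case $t=0$ reduces to $s_j^1=p_j^1$ and $r_j(0)=0$, and the inductive step follows from the recursive definition of $s_j^{t+1}$ because the indicator $\mathbb{1}[\representative(t)=j]$ is precisely the increment $r_j(t)-r_j(t-1)$. Summing this identity over $j$, and using the normalization $\sum_j p_j^s=1$ together with $\sum_j r_j(t)=t$, recovers the invariant $\sum_j s_j^{t+1}=1$ already noted informally in the paper.

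Next, I would show by induction that $s_j^t \ge -1$ for all $j$ and $t$, which is the only delicate step in the whole argument. The case analysis in the inductive step runs as follows: if $j$ wins at time $t$, then $s_j^t$ is the maximum of $m$ scores summing to $1$, so $s_j^t\ge 1/m$, whence $s_j^{t+1} = s_j^t - 1 + p_j^{t+1}\ge 1/m - 1 > -1$; if $j$ does not win, then $s_j^{t+1} = s_j^t + p_j^{t+1}\ge -1+0=-1$ since $p_j^{t+1}\ge 0$. Combined with the invariant $\sum_k s_k^t = 1$, the lower bound $s_k^t\ge -1$ then yields $s_j^t = 1 - \sum_{k\ne j}s_k^t \le 1 + (m-1) = m$, so $|s_j^t|\le m$ uniformly in $t$.

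The conclusion is a direct calculation: rearranging the identity gives
\[
\frac{r_j(t)}{t} \;=\; \frac{t+1}{t}\cdot\frac{1}{t+1}\sum_{s=1}^{t+1} p_j^s \;-\; \frac{s_j^{t+1}}{t},
\]
and the uniform bound forces the last term to tend to $0$, while the Ces\`aro average converges to $p_j^*$ by hypothesis and $(t+1)/t\to 1$. Hence $r_j(t)/t\to p_j^*$, and the first claim of the theorem follows by specialising to the constant case $p_j^s = p_j$. The main obstacle is precisely the uniform bound in the second paragraph; everything else is bookkeeping, and the argument is arguably cleaner than for Frege's original method since the cost of winning is constantly $1$ and needs no separate analysis of when it stabilises.
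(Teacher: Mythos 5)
Your proposal is correct and follows essentially the same route as the paper: the identity $s_j^{t+1}=\sum_{s=1}^{t+1}p_j^s-r_j(t)$, the invariant $\sum_j s_j^t=1$, the inductive lower bound via $s_j^t\ge 1/m$ for the winner, and the final rearrangement dividing by $t$. The only cosmetic difference is that you bound $s_j^t$ directly (obtaining $|s_j^t|\le m$) rather than $s_j^t-p_j^t$ as in the paper's Lemma~\ref{lemma:aggaggscore_is_1}; any uniform bound suffices here.
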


Theorem~\ref{thm:mod_frege_asymp} provides a proportionality guarantee for the modified Frege method in the long run. 
Note that Theorem~\ref{thm:mod_frege_asymp} does not require the number of voters to be fixed, in contrast to the analogous result for Frege's original method where this assumption is necessary (cf.\ Theorem~\ref{thm:frege_asymp} and Example~\ref{ex:exp-n}).
We will now aim for much stronger guarantees, namely guarantees that hold for arbitrary time intervals.
We strengthen the definition of variable integral quota (Definition~\ref{def:integral-quota}) to hold for arbitrary electorates, inspired by the lower and upper quota axioms in the apportionment setting (cf.\ Section~\ref{sec:apportionment}).

\begin{definition}
For all candidates $j\in C$, let $p_j^1, p_j^2, \dots$ be an infinite sequence of normalized plurality scores.
We say that an (infinite) sequence of chosen representatives $\representative(1)$, $\representative(2)$, $\dots$ satisfies \emph{variable upper quota} if for any time~$t\ge 1$ 
it holds that 
\[r_j(t) \le \left\lceil\sum_{s=1}^{t}p_j^s\right\rceil
\quad\text{ for every }j\in C,
\]
and it satisfies \emph{variable lower quota} if for any time~$t\ge 1$ 
it holds that 
\[r_j(t) \ge \left\lfloor\sum_{s=1}^{t}p_j^s\right\rfloor\quad\text{ for every }j\in C.\]
\end{definition}

Note that both variable upper and lower quota imply variable integral quota: in case of integral quotas any deviation from a proportional distribution would also violate variable upper and lower quota. 
In the following we say that the modified Frege method satisfies variable lower or upper quota if any sequence of winners produced by this method satisfies the corresponding axiom.

\newcommand{\thmupperquota}{The modified Frege method satisfies variable upper quota.}

\begin{theorem}
\thmupperquota\label{thm:upperquota}
\end{theorem}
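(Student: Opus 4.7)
The plan is to establish two invariants of the modified Frege method and then argue by a minimal-counterexample argument.

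First, I would record the bookkeeping identity obtained by telescoping the recursive definition of $s_j^t$: writing $r_j(t-1)$ for the number of wins of $j$ strictly before time $t$ (with $r_j(0)=0$), induction on $t$ gives
\[
s_j^t \;=\; \sum_{s=1}^{t} p_j^s \;-\; r_j(t-1) \quad\text{for all }t\ge 1,\ j\in C.
\]
Simultaneously, I would verify the conservation law $\sum_{j\in C} s_j^t = 1$ for all $t\ge 1$: this holds at $t=1$ because $\sum_j p_j^1 = 1$, and in passing from $t$ to $t+1$ the sum is incremented by $\sum_j p_j^{t+1}=1$ and decremented by $1$ for the single winning candidate. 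The conservation law has the immediate consequence that if $j=\representative(t)$, then $s_j^t \ge \max_k s_k^t \ge \frac{1}{m}\sum_k s_k^t = \frac{1}{m} > 0$, so \emph{a candidate can only win at a time step at which its current aggregate score is strictly positive}.

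The main argument is then a minimal-counterexample (equivalently, induction on $t$). Suppose towards a contradiction that variable upper quota fails, and let $t$ be the smallest time at which some candidate $j$ satisfies $r_j(t) > \lceil \sum_{s=1}^t p_j^s\rceil$. If $\representative(t)\neq j$, then $r_j(t)=r_j(t-1)$ and $\lceil \sum_{s=1}^{t} p_j^s\rceil \ge \lceil \sum_{s=1}^{t-1} p_j^s\rceil$, contradicting the minimality of $t$. Hence $\representative(t)=j$, so $r_j(t-1)=r_j(t)-1\ge \lceil \sum_{s=1}^{t} p_j^s\rceil$. Plugging this into the telescoping identity yields
\[
s_j^t \;=\; \sum_{s=1}^{t} p_j^s \;-\; r_j(t-1) \;\le\; \sum_{s=1}^{t} p_j^s \;-\; \Bigl\lceil \sum_{s=1}^{t} p_j^s\Bigr\rceil \;\le\; 0,
\]
contradicting the positivity of the winner's aggregate score established above. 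This contradiction completes the proof.

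I do not foresee a serious obstacle: the argument is essentially a conservation/potential argument, and its two ingredients (the telescoping identity and the fact that $\sum_j s_j^t$ is preserved at $1$) are both routine inductions. The only subtlety is handling the case in which $\sum_{s=1}^t p_j^s$ is itself an integer, where $s_j^t$ is only forced to be $\le 0$ rather than strictly negative; this is precisely why one needs the sharper lower bound $s_j^t\ge 1/m$ on the winner's score, rather than merely $s_j^t>0$, though in fact even the weak positivity suffices since it already contradicts $s_j^t\le 0$.
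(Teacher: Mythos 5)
Your proof is correct and rests on exactly the same ingredients as the paper's argument: the telescoping identity $s_j^{t+1}=\sum_{s=1}^{t+1}p_j^s-r_j(t)$, the conservation law $\sum_{j\in C} s_j^t=1$, and the observation that a winner's aggregate score is at least $1/m>0$. The paper merely packages these as the inductive invariant $-1<s_j^t-p_j^t$ (Lemmas~\ref{lemma:aggaggscore_is_1} and~\ref{lem:bounds_for_quota}) and then converts, whereas you phrase the same argument as a minimal counterexample; there is no substantive difference and no gap.
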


As a consequence, the modified Frege method also satisfies variable integral quota. By contrast, it violates variable lower quota, as the following example illustrates.

\begin{example}
\label{ex:lowerquota}
Let us consider a fixed electorate with six candidates and $2750$ voters. The plurality scores are $1001,$ $1000$, $206$, $182$, $181$, and~$180$, respectively. For increased readability, the corresponding normalized plurality scores are obtained by dividing these numbers by~$2750$ in
Table~\ref{table:lowerquota}. 
\begin{table}
{\setlength{\arraycolsep}{6pt}
\[
\begin{array}{crrrrrrc}\toprule
\mathrm{time}&
\mathwordbox{a}{15.0}	&
\mathwordbox{b}{15.0}	&
\mathwordbox{c}{15.0}	&
\mathwordbox{d}{15.0}	& 
\mathwordbox{e}{15.0}	& 
\mathwordbox{f}{15.0}	& 
\mathrm{representative}\\\midrule
1 & \mathbf{1001} & 1000 & 206 & 182 & 181 & 180 & a\\
2 & -748 & \mathbf{2000} & 412 & 364 & 362 & 360 & b\\
3 & 253 & 250 & \mathbf{618} & 546 & 543 & 540 & c\\
4 & \mathbf{1254} & 1250 & -1926 & 728 & 724 & 720 & a\\
5 & -495 & \mathbf{2250} & -1720 & 910 & 905 & 900 & b\\
6 & 506 & 500 & -1514 & \mathbf{1092} & 1086 & 1080 & d\\
7 & \mathbf{1507} & 1500 & -1308 & -1476 & 1267 & 1260 & a\\
8 & -242 & \mathbf{2500} & -1102 & -1294 & 1448 & 1440 & b\\
9 & 759 & 750 & -896 & -1112 & \mathbf{1629} & 1620 & e\\
10 & 1760 & 1750 & -690 & -930 & -940 & \mathbf{1800} & f\\
11 & \mathbf{2761} & 2750 & -484 & -748 & -759 & -770 & a\\
\bottomrule
\end{array}
\]
}
\caption{The modified Frege method violates lower quota (Example~\ref{ex:lowerquota}). For increased readability, the aggregate scores~$s_j^t$ are multiplied by~$2750$.}
\label{table:lowerquota}
\end{table}
The variable lower and upper quota of candidate~$b$ at round $11$ is $\frac{11\cdot 1000}{2750}=4$, but candidate~$b$ has been chosen only~$3$ times, \ie, $r_b(11)=3$. 
Similar examples can be found for $m=4$ (for instance, with plurality scores of $1001, 1000, 115$, and~$26$, and for $t=30$) and $m=5$ (for instance, with plurality scores of $1001, 1000, 300, 107$, and~$92$ and for $t=15$).
\end{example}

We can nevertheless show that the violations of variable lower quota by the  modified Frege method are not too severe.
This is in particular the case for electorates with few candidates, as the following theorem shows.

\newcommand{\thmlowerquota}{For $m \in \{2,3\}$, the modified Frege method satisfies variable lower quota.
For $m \ge 4$, we have $\norepr_j(t) \ge \left\lfloor\sum_{s=1}^{t}p_j^s\right\rfloor- \left\lceil\frac{m-3}{2}\right\rceil$ for every candidate $j$ and time $t\ge 1$.}
\begin{theorem}
\thmlowerquota\label{thm:lower-quota}
\end{theorem}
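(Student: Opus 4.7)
The plan is to prove the single uniform real-valued bound $d_j(t) := q_j(t) - r_j(t) < \lceil (m-1)/2 \rceil$ for every candidate $j$ and every time $t \geq 0$, where $q_j(t) := \sum_{s=1}^t p_j^s$. Since both $r_j(t)$ and $\lfloor q_j(t) \rfloor$ are integers and $\lceil (m-1)/2 \rceil = 1 + \lceil (m-3)/2 \rceil$, this strict inequality is exactly equivalent to $r_j(t) \geq \lfloor q_j(t) \rfloor - \lceil (m-3)/2 \rceil$. For $m \in \{2,3\}$ we have $\lceil (m-3)/2 \rceil = 0$, so both halves of the theorem fall out of the same bound.

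Before the induction I would record two structural identities. Unwinding the recursion for $s_j^t$ gives $s_j^t = q_j(t) - r_j(t-1) = d_j(t-1) + p_j^t$, from which $d_j(t) = s_j^t - \mathbb{1}[\representative(t) = j]$. A direct calculation using $\sum_{j\in C} p_j^t = 1$ shows by induction on $t$ that $\sum_{j\in C} s_j^t = 1$ for all $t\geq 1$, which is the defining design feature of the modified method. Finally, Theorem~\ref{thm:upperquota} provides the strict lower bound $d_k(t) > -1$ for every $k$ and every $t\geq 0$ (it is strict in both cases $q_k(t)\in\mathbb{Z}$ and $q_k(t)\notin\mathbb{Z}$).

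The induction on $t$ is then short. The base $d_j(0) = 0 < \lceil (m-1)/2 \rceil$ is immediate. For the inductive step set $D = \lceil (m-1)/2 \rceil$, assume $d_k(t-1) < D$ for all $k$, and let $i$ be the winner at round $t$. For the winner we have $d_i(t) = d_i(t-1) + p_i^t - 1 \leq d_i(t-1) < D$ since $p_i^t \leq 1$. For a non-winner $j$ we have $d_j(t) = s_j^t$ and $s_j^t \leq s_i^t$, and from $\sum_{k} s_k^t = 1$ we derive
\[
2\, s_j^t \;\leq\; s_i^t + s_j^t \;=\; 1 - \sum_{k \neq i,j} s_k^t.
\]
Each term in the remaining sum satisfies $s_k^t = d_k(t-1) + p_k^t > -1$ by upper quota applied at time $t-1$. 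When $m \geq 3$ this strictly beats $-(m-2)$, giving $s_j^t < (m-1)/2 \leq D$; when $m = 2$ the sum is empty and $s_j^t \leq 1/2 < 1 = D$. Either way $d_j(t) < D$, closing the induction.

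The argument is essentially driven by the invariant $\sum_j s_j^t = 1$ combined with Theorem~\ref{thm:upperquota}, and the main conceptual obstacle is pinpointing the right quantity $\lceil (m-1)/2 \rceil$: one sees that $(m-1)/2$ is precisely the threshold beyond which having both $s_j^t$ and $s_i^t$ exceed it would force some other $s_k^t$ below $-1$, contradicting upper quota. The only technical care point is converting the strict real-valued bound into the integer statement of the theorem, which is clean because $\lceil (m-1)/2 \rceil - 1 = \lceil (m-3)/2 \rceil$ and both $r_j(t)$ and $\lfloor q_j(t) \rfloor$ are integers.
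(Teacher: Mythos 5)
Your proof is correct and follows essentially the same route as the paper's: an induction establishing a two-sided bound on $\sum_{s\le t}p_j^s - r_j(t)$, with the winner case handled via $p_i^t\le 1$ and the non-winner case by combining $s_j^t\le s_i^t$ with the invariant $\sum_{k}s_k^t=1$ and the lower bound $>-1$ on the remaining candidates' scores (which is exactly the paper's Lemma on upper quota). The only, cosmetic, difference is that your uniform strict bound $<\lceil (m-1)/2\rceil$ also absorbs the case $m=2$, which the paper instead settles by a separate complementarity argument deducing candidate $a$'s lower quota from candidate $b$'s upper quota.
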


The following example shows, however, that variable lower quota violations can still be arbitrarily large. 
Yet, this construction requires a number of candidates that is exponential in the size of the violation.
Thus, it may still be possible to strengthen the bounds of Theorem~\ref{thm:lower-quota} for the cases in which $m\ge 6$. (Example~\ref{ex:lowerquota} shows that Theorem~\ref{thm:lower-quota} is optimal for $m=4$ and $m=5$.)

\begin{example}
We define a variable electorate with candidates $C=\{1,\dots,m\}$ and $n_t = m-t+1$ voters for $t\in \{1,\dots,m\}$.
At time $t$, we have plurality scores of 
\begin{align*}
& \pi_j^t = 0 &&\text{for }j\in \{1,\dots,t-1\},\\
& \pi_j^t = 1 &&\text{for }j\in \{t,\dots,m\}.
\end{align*}
The corresponding normalized plurality scores are
\begin{align*}
& p_j^t = 0 				&&\text{for }j\in \{1,\dots,t-1\},\\
& p_j^t = \frac{1}{m-t+1} 	&&\text{for }j\in \{t,\dots,m\}.
\end{align*}
Furthermore, we assume that if candidate~$i$ and~$j$ are tied, then the tie is broken in favor of~$\min(i,j)$.
Due to this tie-breaking assumption, the modified Frege method selects candidate~$1$ in the first round, candidate~$2$ in the second, candidate~$t$ in round~$t$.
Let us consider round~$m-1$, in which candidate~$m-1$ wins. 
The variable lower quota of candidate~$m$ is \[
\left\lfloor\sum_{s=1}^{m-1}p_m^s\right\rfloor=\left\lfloor\frac{1}{m}+\frac{1}{m-1}+\dots+\frac{1}{2}\right\rfloor=\left\lfloor\sum_{i=2}^{m}\frac{1}{i}\right\rfloor> \sum_{i=1}^{m}\frac{1}{i}-2.
\] 
%We can bound this sum by
%\[-2+\sum_{i=1}^{m}\frac{1}{i} > -2+\int_1^{m+1} \frac{1}{i}\ di = -2+\ln\left({m+1}\right), %\]
Since the harmonic series $\sum_{i=1}^\infty \frac 1 i$ grows without limit, the variable lower quota of candidate~$m$ is unbounded for a growing number of candidates~($m$).
Recall that candidate~$m$ does not win before round~$m$.
%, also the violation of $m$'s variable lower quota is unbounded for a growing number of candidates.
% and hence we have the following lower bound on the violation of $m$'s variable lower quota:
%\[
%\left\lfloor\sum_{s=1}^{m-1}p_m^s\right\rfloor -\ \underbrace{r_m(m-1)}_{=0}\ 
%>\ -2+\ln\left({m+1}\right).
%\]
Thus, if $m$ tends to infinity, so does the violation of candidate $m$'s variable lower quota at time $t=m-1$.
\end{example}

\section{The Apportionment Setting}
\label{sec:apportionment}

In this section, we want to analyse Frege's methods from the viewpoint of apportionment.
Let us first review the apportionment problem and well-known methods that provide apportionment solutions.

\subsection{Apportionment methods}

An \emph{apportionment problem} for $m$ parties is given by a distribution $\mathbf p = (p_1,\dots,p_m)$ of votes with $\sum_{i=1}^m p_i = 1$ and a desired house size $k$. A \emph{solution} to the apportionment problem $(\mathbf p,k)$ is an $m$-sequence of non-negative integers $(a_1,\dots,a_m)$ with $\sum_{i=1}^m a_i = k$.
An \emph{apportionment method} is a function that returns for every apportionment problem a valid solution\footnote{To simplify the presentation, we assume that ties are broken in some fashion and thus apportionment methods always return a single solution.}.
Apportionment has two main applications: to assign a fixed number of parliamentary seats to parties (proportionally to their vote count), and to assign representatives in a senate to states (proportionally to their population count). From a mathematical point of view, these two applications are indistinguishable, and, in particular, this distinction is not relevant for our study.
For the sake of clarity, we speak in the following of parties and seats.
First, we are going to introduce some important apportionment methods \citep[cf.][]{Balinski1982FairRepresentation:Meeting,Pukelsheim2017ProportionalRepresentation:Apportionment}.

\paragraph{Largest remainder method.} The earliest proposal for an apportionment method is the \emph{largest remainder method} (or Hamilton method).
The largest remainder method assigns in a first step $\lfloor kp_i \rfloor$ seats to each party. In a second step, all remaining seats are distributed so that each party receives at most one seat. Priority is given to parties with the largest remainder, that is, those with largest $kp_i - \lfloor kp_i \rfloor$.

\paragraph{Divisor methods.} Divisor methods are the most commonly used apportionment methods. Their definition is based on \emph{divisor criteria}: A divisor criterion is a  monotonically increasing function $d: \mathbb{N}\to\mathbb{R}$ that satisfies $i \le d(i) \le i + 1$ for all $i\geq 0$.
A divisor criterion $d$ induces a $d$-rounding, defined as follows: \[\left[x\right]_d = \left\{a\in \mathbb{N} : d(a-1)\le x \le d(a)\right\}.\] If $x=d(a)$ for some $a$, then $\left[x\right]_d$ contains two integers, otherwise only one.
For example, the divisor criterion $d(a)=a+1$ corresponds to rounding down, with the slight difference that rounding down an integer~$x$ with $x=a$ yields here both $a$ and $a-1$.  

Given a divisor criterion $d$ we define a corresponding divisor method: the set of $d$-admissible solutions is defined as \[\left\{(a_1,\dots,a_m) \in \mathbb{N}^m: \sum_{i=1}^m a_i = k \text{ and }a_i \in \left[\frac{p_i}{x}\right]_d \text{ for some positive }x\in \mathbb{R}\right\}.\]
As we require that apportionment methods return only one solution, a tie-breaking mechanism may be necessary to choose one solution in this set.

We can now define the most common divisor methods:
the D'Hondt method (or Jefferson method) is defined by $d(a)=a+1$, that is, rounding down. The Adams method is defined by $d(a)=a$ (rounding up). The  Sainte-Lagu\"e method (or Webster method) is defined by $d(a)=a+0.5$, which corresponds to rounding to the nearest integer. 
Finally, the Huntington-Hill method uses the $d(a)=\sqrt{a(a+1)}$ criterion.

\paragraph{Quota method.} The quota method~\citep{balinski1975quota} is the most recent addition to this list of apportionment methods. It is defined iteratively, starting with the empty solution $(0,\dots,0)$.
In round $\ell\geq 1$, if $(a_1,\dots,a_m)$ is the current solution ($\sum a_i = \ell-1$), we consider all parties that would not violate upper quota (cf. Definition~\ref{def:uppper_quota}) if they received an additional seat, that is, all parties $i$ with $a_i + 1\leq \lceil p_i \ell \rceil$ or, equivalently, $a_i < p_i \ell$.
Then we choose among these parties the one party $i$ with maximum $p_i/(a_i +1)$ 
(subject to a tie-breaking, if necessary); party~$i$ receives another seat.

\paragraph{Frege's apportionment method.} Both Frege's original method and
the modified Frege method can easily be transformed into apportionment methods.
However, since Frege's original method violates even a very basic proportionality property (weak proportionality, see below),  
it is not a sensible method in this context and we omit it here from further study.
To apply the modified Frege method, we
view the vote distribution $(p_1,\dots,p_m)$ as a fixed electorate and apply the method for $k$ rounds, thus obtaining an apportionment solution $(r_1(k),\dots,r_m(k))$.
Let us refer to this method as \emph{Frege's apportionment method}.

Given this interpretation, it is natural to ask how Frege's apportionment method compares to other apportionment methods, and, in particular, whether it is equivalent to an already established method in the apportionment setting. Example~\ref{ex:frege_distinct} gives a negative answer to this question. 

\begin{example}\label{ex:frege_distinct}
A concrete example where all apportionment methods listed in Table~\ref{tab:app} yield different solutions is given by 
$\mathbf p = (\frac{79}{98}, \frac{7}{98}, \frac{6}{98}, \frac{3}{98}, \frac{2}{98}, \frac{1}{98})$ 
and a house size $k=20$.
We omit the calculations and only list the results:\medskip

\[
\begin{array}{l@{\quad}l}
\text{Largest Remainder:} & (16, 2, 1, 1, 0, 0) \\
\text{D'Hondt (Jefferson):} & (18, 1, 1, 0, 0, 0)\\
\text{Adams:} & (14, 2, 1, 1, 1, 1)\\
\text{Sainte-Lagu\"e (Webster):} & (17, 1, 1, 1, 0, 0) \\
\text{Huntington-Hill:} & (15, 1, 1, 1, 1, 1)\\
\text{Quota method:} & (17, 2, 1, 0, 0, 0)\\
\text{Frege's apportionment method:} & (16, 1, 1, 1, 1, 0)
\end{array}
\]
\end{example}
In the following section, we address the issue in more depth from an axiomatic perspective.

\subsection{Apportionment axioms}
\label{subsec:frege-app}

For an overview of apportionment methods and their respective properties, we refer the reader to Table~\ref{tab:app}; the corresponding analysis can be found, e.g., in the book by \cite{Balinski1982FairRepresentation:Meeting}. In the following, we discuss axiomatic properties of 
Frege's apportionment method.
% The column ``bias'' is discussed in Section~\ref{subsec:bias}.

{\setlength{\tabcolsep}{3pt}
\newcolumntype{P}[1]{>{\centering\arraybackslash}p{#1}}
\begin{table}
\begin{center}
\begin{tabular}{lP{4.0em}P{4.0em}P{3.7em}P{3.7em}P{4.2em}}
\toprule  & house monot. & popul. monot. & lower quota & upper quota & quota for $m=3$  \\\midrule
Largest Remainder & \xmark & \xmark & \cmark & \cmark & \cmark  \\ 
D'Hondt (Jefferson) & \cmark & \cmark & \cmark & \xmark & \xmark \\ 
Adams & \cmark & \cmark & \xmark & \cmark & \xmark \\ 
Sainte-Lagu\"e (Webster) & \cmark & \cmark & \xmark & \xmark & \cmark \\ 
Huntington-Hill & \cmark & \cmark & \xmark & \xmark & \xmark \\ 
Quota method & \cmark & \xmark & \cmark & \cmark & \cmark \\ 
Frege's apportionment method & \cmark & \xmark & \xmark & \cmark & \cmark \\ \bottomrule
\end{tabular} 
\end{center}
\caption{An overview of apportionment methods and their respective properties.
}
\label{tab:app}
\end{table}
}

As a first step, we want to discuss a basic requirement of apportionment methods, called weak proportionality.
\begin{definition}
An apportionment method satisfies \emph{weak proportionality} if,
given an apportionment problem $((p_1,\dots,p_m), k)$ with $k\cdot p_i$ being integer for every $i\in \{1,\dots,m\}$, the method returns $(kp_1,\dots,kp_m)$.
\label{def:weak_proport}
\end{definition}
It is easy to see that our concept of variable integral quota is closely related to weak proportionality.
Thus, Frege's original method, seen as an apportionment method, violates this property.
This is the reason why we focus on Frege's apportionment method (which is based on the modified Frege method). Let us now consider two stronger proportionality requirements:

\begin{definition}
An apportionment method satisfies \emph{upper quota} if, for any apportionment problem $((p_1,\dots,p_m), k)$, the method returns a solution $(a_1,\dots,a_m)$ satisfying $a_i\le \lceil kp_i\rceil$ for all $i\in \{1,\dots,m\}$.
An apportionment method satisfies \emph{lower quota} if, for any apportionment problem $((p_1,\dots,p_m), k)$, the method returns a solution $(a_1,\dots,a_m)$ satisfying $a_i\geq \lfloor kp_i\rfloor$ for all $i\in \{1,\dots,m\}$.
An apportionment method satisfies \emph{quota} if it satisfies both lower and upper quota.
\label{def:uppper_quota}
\end{definition}

Note that upper quota implies weak proportionality since any deviation from the proportional solution $(kp_1,\dots,kp_m)$ would violate upper quota for some voter.
The same holds for lower quota.

Frege's apportionment method satisfies upper quota (and thus weak proportionality) but fails lower quota.
This is an immediate consequence of Theorem~\ref{thm:upperquota} and Example~\ref{ex:lowerquota}, respectively.
Note that Theorem~\ref{thm:lower-quota} also holds in the apportionment setting and thus Frege's apportionment method satisfies quota for $m\in\{2,3\}$, and for $m\ge 4$ violates lower quota by at most $\lceil\frac{m-3}{2}\rceil$.

Let us now turn to two monotonicity axioms, viz., house and population monotonicity.

\begin{definition}
An apportionment method satisfies \emph{house monotonicity} if the following holds:
for any vote distribution $\mathbf p$ and positive integer $k$, if this method returns the solution $(a_1$, $\dots$, $a_m)$ for the problem $(\mathbf p,k )$ and the solution $(b_1,\dots,b_m)$ for the problem $(\mathbf p,k+1)$, then there exists $1\le i \le m$ such that (i) $a_i+1 = b_i$ and (ii) $a_j=b_j$ for all $j\neq i$.
\end{definition}

In other words, if the house size increases by one, then the apportionment solution can change only by an increase of $1$ for one party. The largest remainder method is notable in that it actually violates this basic criterium.
As Frege's apportionment method is calculated iteratively, it is easy to see that it satisfies house monotonicity.

\begin{definition}
An apportionment method satisfies \emph{population monotonicity} if the following holds:
for vote distributions $\mathbf p$, $\mathbf p'$ and a positive integer $k$, if this method returns the solution $(a_1,\dots,a_m)$ for the problem $(\mathbf p,k )$ and the solution $(b_1,\dots,b_m)$ for the problem $(\mathbf p',k)$, then for any $i,j \in \{1,\dots,m\}$:
\[\frac{p'_i}{p'_j}\ge \frac{p_i}{p_j}\text{ implies that either }a_i'\ge a_i\text{ or }a_j'\le a_j.\]
\end{definition}

In other words, if party~$i$ increases its vote count relative to party~$j$, then either $i$ does not lose seats or $j$ does not gain seats.
We speak of a population paradox if this property is violated: a gain for party~$i$ relative to party~$j$ grants extra seats for~$j$ while~$i$ loses seats.
The following example shows that Frege's apportionment method suffers from the population paradox and thus violates population monotonicity.

\begin{example}
\setlength{\arraycolsep}{6pt}
Consider the following two scenarios with three parties ($a$, $b$, and $c$) and three seats ($k=3$):
In the first, we have 
$\mathbf p=(\frac{8}{20},\frac{3}{20},\frac{9}{20})$,
for which Frege's apportionment method yields:
\[
\begin{array}{crrrc}\toprule
\mathrm{time}&
\mathwordbox{a}{15.0}	&
\mathwordbox{b}{15.0}	&
\mathwordbox{c}{15.0}	&
\mathrm{representative}\\\midrule
1 & \frac{8}{20} & \frac{3}{20} & \mathbf{\frac{9}{20}} & c\\
2 & \mathbf{\frac{16}{20}} & \frac{6}{20} & -\frac{2}{20} & a\\
3 & \frac{4}{20} & \mathbf{\frac{9}{20}} & \frac{7}{20} & b\\
\bottomrule
\end{array}
\]
In the second scenario, we have 
$\mathbf{p}'=(\frac{5}{20},\frac{4}{20},\frac{11}{20})$.
\[
\begin{array}{crrrc}\toprule
\mathrm{time}&
\mathwordbox{a}{15.0}	&
\mathwordbox{b}{15.0}	&
\mathwordbox{c}{15.0}	&
\mathrm{representative}\\\midrule
1 & \frac{5}{20} & \frac{4}{20} & \mathbf{\frac{11}{20}} & c\\
2 & \mathbf{\frac{10}{20}} & \frac{8}{20} & \frac{2}{20} & a\\
3 & -\frac{5}{20} & \frac{12}{20} & \mathbf{\frac{13}{20}} & c\\
\bottomrule
\end{array}
\]
Now consider parties $b$ and $c$.
We have 
\[\frac{p_b}{p_c} = \frac{1}{3}< \frac{4}{11} = \frac{p'_b}{p'_c},\]
\ie, the relative strength of $b$ over $c$ increases, but $b$ loses a seat while $c$ gains one.
\end{example}

Theorem~\ref{thm:apportionment-axioms} summarizes our findings in this section.
\begin{theorem}
\label{thm:apportionment-axioms}
Frege's apportionment method satisfies house monotonicity, up\-per quo\-ta, and quota for $m\in\{2,3\}$, but fails lower quota for $m\ge 4$ and population monotonicity.
\end{theorem}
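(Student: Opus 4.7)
The plan is to prove each of the five claims separately; the three positive statements reduce to earlier results, while the two negative statements are witnessed by explicit examples.

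House monotonicity is immediate from the iterative structure of the method. For a fixed electorate $(p_1,\dots,p_m)$, the solution at house size $k+1$ is obtained from the solution at house size $k$ by performing one additional round of the modified Frege method, and each such round selects exactly one party to receive an extra seat. Hence the $k$-seat and $(k+1)$-seat solutions differ by precisely one seat for one party.

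Upper quota and quota for $m\in\{2,3\}$ reduce to Theorems~\ref{thm:upperquota} and~\ref{thm:lower-quota}. By definition, Frege's apportionment method applied to $((p_1,\dots,p_m),k)$ returns the vector $(r_1(k),\dots,r_m(k))$ produced by the modified Frege method on the fixed electorate with $p_j^s = p_j$ for all $s$. Under this identification, variable upper quota specializes to $r_j(k) \le \lceil kp_j \rceil$, and, for $m\in\{2,3\}$, variable lower quota specializes to $r_j(k) \ge \lfloor kp_j \rfloor$; together these give quota in the small-$m$ case.

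For the negative claims, the failure of lower quota for $m\in\{4,5,6\}$ is already witnessed by Example~\ref{ex:lowerquota}. To cover all $m \ge 7$, I would pad the $m=6$ instance with $m-6$ additional candidates of normalized plurality score~$0$, placed last in the tie-breaking order. Since these candidates never receive any votes, their aggregate scores remain equal to~$0$ throughout, while the winning aggregate scores in Example~\ref{ex:lowerquota} are strictly positive in every one of the eleven rounds. Hence the padding candidates are never chosen, the dynamics of the original six candidates are preserved, and the lower-quota violation for candidate~$b$ at round $t=11$ persists for every $m\ge 7$. The failure of population monotonicity is established by the worked three-candidate example immediately preceding the theorem, where $p'_b/p'_c > p_b/p_c$ but $b$ loses a seat while $c$ gains one. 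The main, though mild, technical point of the whole argument is the routine verification of the padding claim, namely that appending zero-score candidates cannot perturb the iteration of the original example.
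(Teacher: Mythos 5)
Your proposal is correct and follows essentially the same route as the paper, which likewise assembles the theorem from Theorems~\ref{thm:upperquota} and~\ref{thm:lower-quota}, Example~\ref{ex:lowerquota}, the population-monotonicity counterexample preceding the theorem, and the observation that the iterative construction immediately yields house monotonicity. Your zero-score padding argument for $m\ge 7$ is a welcome supplement: the paper only exhibits lower-quota violations for $m\in\{4,5,6\}$, and your check that zero-score candidates never attain the (strictly positive) maximal aggregate score---so the dynamics of the six original candidates and candidate~$b$'s violation at $t=11$ are untouched---closes that small gap.
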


This theorem shows in particular that Frege’s apportionment method
is not a divisor method,
as divisor methods satisfy---and can even be uniquely characterized by---population monotonicity~\citep{Balinski1982FairRepresentation:Meeting}.
Furthermore, the theorem shows an axiomatic difference with the largest remainder method (which fails house monotonicity), as well
as to the quota method (which satisfies lower quota).

\subsection{Bias}
\label{subsec:bias}

As a final aspect of apportionment methods, we consider ``bias'': does a method favor small over large parties---or vice versa?
Bias is generally more of a concern when using apportionment methods for assigning representatives to states and less so for parties. In parliamentary  elections, a bias for larger parties can support the formation of governments and disincentivize schisms of parties~\citep{rae1967political,Balinski1982FairRepresentation:Meeting}.
In contrast, when assigning representatives to states, a fair treatment of large and small states is often an essential property (e.g., in the U.S.\ House of Representatives). However, an example of strong bias (in the aforementioned sense) is the European parliament, where small countries have disproportionally many members; this is referred to as \emph{degressive proportionality} \citep{RePEc:ucp:jpolec:doi:10.1086/670380}.

To formalize ``bias'' as an axiom is difficult, as it is best described as a tendency. 
\cite{Balinski1982FairRepresentation:Meeting} formalize what it means for a divisor method to be  unbiased, but this definition does not extend to arbitrary apportionment methods.
\citet{Pukelsheim2017ProportionalRepresentation:Apportionment} provides a more general, probabilistic analysis assuming that vote distributions are distributed uniformly at random and that the house size converges to infinity. 
A third approach is to compute bias in given data sets. 
This has been done by \citet{Balinski1982FairRepresentation:Meeting}  and \citet{birkhoff1976house} based on Congressional apportionment in the USA.
It is noteworthy that all these analyses yield similar results.

Our approach is to determine bias via numerical simulations.\footnote{The Python source code to run these simulations is available [reference omitted for reasons of anonymity]
}
We employ the following simple test:
we assume five parties, each having a vote count between 1 and 1000, drawn uniformly at random.
Furthermore, we assume a house size of 100 seats.
For each apportionment method, we compute the number of votes per representative of the smallest and the largest party.
If $p_l$ and $p_s$ are the vote counts for the largest and smallest party, respectively, and $a_l$ and $a_s$ are the number of seats of the largest and smallest party,
we say that the given apportionment method favors the smaller party if
\[\frac{p_s}{a_s} < \frac{p_l}{a_l},\]
\ie, if the smaller party requires fewer votes per seat.
We computed the fraction of apportionment problems where the smaller party had this advantage based on 1.000.000 instances. A value of $50\%$ would correspond to being perfectly unbiased, as small and large parties are favored equally often.

{\setlength{\tabcolsep}{12pt}
\newcolumntype{P}[1]{>{\centering\arraybackslash}p{#1}}
\begin{table}
\begin{center}
\begin{tabular}{lcc}
%%%bias means with 95% confidence intervals
%%%quota .................... 12.668% (12.603%, 12.733%)
%%%lrm ...................... 48.511% (48.413%, 48.608%)
%%%dhondt ................... 11.876% (11.812%, 11.939%)
%%%saintelague .............. 48.481% (48.383%, 48.579%)
%%%huntington ............... 55.687% (55.589%, 55.784%)
%%%adams .................... 87.577% (87.513%, 87.642%)
%%%modfrege ................. 54.480% (54.382%, 54.577%)
\toprule  & bias & 95\% confidence interval \\\midrule
Largest Remainder & 48.5\% & (48.41\%, 48.61\%) \\ 
D'Hondt (Jefferson) & 11.9\% & (11.81\%, 11.94\%) \\ 
Adams & 87.6\% & (87.51\%, 87.64\%) \\ 
Sainte-Lagu\"e (Webster) &  48.5\% & (48.38\%, 48.58\%)\\ 
Huntington-Hill &  55.7\% & (55.59\%, 55.78\%) \\ 
Quota method &  12.7\% & (12.60\%, 12.73\%)\\ 
Frege's apportionment method &  54.5\% & (54.38\%, 54.58\%) \\ \bottomrule
\end{tabular} 
\end{center}
\caption{Bias of apportionment methods computed based on numerical simulations, along with the 95\% confidence intervals. Bias, as shown here, is the percentage of instances where the smallest party is favored over the largest party; a value of 50\% corresponds to ``no bias'' for the used set of apportionment problems. 
}
\label{tab:app-bias}
\end{table}
}

The results are shown in Table~\ref{tab:app-bias} (including 95\% confidence intervals), and can be summarized as follows.
Adams favors small parties; D'Hondt and the quota method favor large parties.
Sainte-Lagu\"e and the Largest remainder method are well-balanced, as is Huntington-Hill, but to a lesser degree.
All these findings are in alignment with previous work~\citep{Balinski1982FairRepresentation:Meeting,birkhoff1976house,Pukelsheim2017ProportionalRepresentation:Apportionment}.
Frege's apportionment method achieves a ratio of $54.5\%$, in between Sainte-Lagu\"e and Huntington-Hill, and thus can be seen as a rather unbiased apportionment method.

To sum up our findings, Frege's apportionment satisfies strong proportionality guarantees, which are not achievable in the class of divisor methods. The quota method satisfies slightly stronger proportionality guarantees (both upper and lower quota), but is biased towards large parties.
Frege's method, in contrast, shows no particular bias towards large or small parties.

\section{Conclusions and Discussion}\label{sec:disc}

In our mathematical study of  Frege's voting method we focused on the extent to which it guarantees various forms of proportionality. 
Accordingly, we ignored a number of its other features that are still worth discussing.

\subsection*{Practical applicability in political elections}
It should be noted that the proportionality guarantees of Frege's method only apply to single constituencies (when observed over time) but not to the political assemblies formed by the chosen representatives. It is thus possible that the political assembly does not at all reflect the entire electorate's current political opinion. 
This issue becomes even more dramatic if one considers the actual political decision power within such an assembly \citep[cf. the work on power indices, e.g.,][]{rae1969decision,dubey1979mathematical,felsenthal1998measurement,napel2019voting}.
In particular, it may be beneficial for a group of candidates (e.g., a party) to receive few additional votes at time $t$, so that all of them are elected at time $t+1$ and thus potentially achieve a majority in the assembly.
This paradoxical behaviour leads us to the conclusion that Frege's method and the modified Frege method are only sensible for single decisions and not so much in the broader sense for electing assemblies. 

In addition, Frege's idea is only attractive if the main concern is fairness towards candidates (in the sense that no votes are lost) and only in the absence of harmful extremist opinions (as also extremist candidates would win eventually).
This is likely to be the case in low-stake, high-frequency settings, where the long-term behaviour of a mechanism is much more important than individual decisions.
In such settings, moreover, the strong assumption, occasionally made in this paper, that the electorate is fixed and does not change their preferences, would arguably also be more reasonable.
Frege's apportionment method, as introduced in Section~\ref{sec:apportionment}, is not affected by these considerations and can be recommended in situations where its axiomatic properties appear desirable.

\subsection*{Gerrymandering}
Frege claimed in his proposal (\emph{Erl\"auterungen},~302,~[9]) that his method provides a safeguard against gerrymandering (``\emph{Wahlkreisgeometrie}''), that is, the strategic districting by a political party for electoral gain, an iniquity that infamously pervades representative systems based on first-past-the-post methods for electing representatives (see, e.g., \citealt{ricca2013political}). The validity of Frege's claim, however, much
depends on the exact assumptions that are being made and accordingly warrants a careful analysis. 

The effectiveness of gerrymandering obviously rests upon the possibility of affecting the proportional support of a party in constituencies.
However, if the constituencies are of equal size and the combined electorate of all constituencies is fixed, redistricting will not affect the \emph{sum} of these proportions, no matter how clever the gerrymander.
\iffalse
\footnote{To make this precise, let there be~$k$ constituencies of~$n$ voters each and let the distribution of a party's support be given by a vector $\mathbf{x}=(x_1,\dots,x_k)$, where $x_j$ ($1\le j\le k$) denotes the number of  supporters of a given party in constituency~$j$. 
The proportion of a constituency~$j$ supporting a party's candidate is thus given by $\nicefrac{x_j}{n}$.
Now, redistricting may result in an alternate distribution $\mathbf{y}=(y_1,\dots,y_k)$.
Obviously, it holds that
$\sum_{1\le i\le k}x_i=\sum_{1\le i\le k}y_i$.
Hence,
$\sum_{1\le j\le k} {\frac{x_j}{n}}
	\mathwordbox{=}{===}	\frac{1}{n} \sum_{1\le j\le k} x_j
	\mathwordbox{=}{===}	\frac{1}{n} \sum_{1\le j\le k} y_j
	\mathwordbox{=}{===}	\sum_{1\le j\le k} \frac{y_j}{n}$.
}\fi
Thus, if redistricting can only be performed once, Theorem~\ref{thm:frege_asymp} shows that, as time goes to infinity,
the number of times each candidate for a party is elected in her respective constituency will be in accordance with this proportion.
It would thus follow that Frege's method does indeed contravene the designs of gerrymanderers. 
The argument can be generalized to electorates that are not necessarily fixed but still comply with the convergence conditions as in the second part of Theorem~\ref{thm:frege_asymp}.

A number of caveats, however, are in place regarding the sweep of this argument.
First, Frege's voting method is based on the plurality rule, and as such it is still susceptible to gerrymandering for gain in the short run if an individual election in a constituency is
seen as a singular event (ignoring past and future elections) or as particularly important.
Second, even when considering the temporal nature of Frege's method, incidental gerrymandering may be successful in achieving short-term benefits without harming the candidates' long-term chances. This point also relates to the question of power distribution within an assembly, as discussed above.
Third, Frege argues that constituencies should be kept of a similar size and remain largely unchanged over time (\emph{Vorbemerkung}, 197,~[1]). This demand by itself excludes some forms of gerrymandering but cannot be seen as a (mathematical) guarantee of his voting method.
Therefore, Frege's method certainly prevents or hinders certain forms of gerrymandering, but to which extent and under which assumptions is a question we leave for future research.
Furthermore, it would be interesting to investigate whether the modified Frege method guarantees a better protection against gerrymandering.

\subsection*{Choice candidates and delegation}
Among all candidates in a constituency, Frege proposes to distinguish so-called \emph{choice candidates} (``\emph{Erlesene}''), the twenty-five candidates in a constituency with maximal electoral backing, among which the representative will be chosen.
So as to ensure that no votes are lost, Frege also provisioned for a delegation mechanism, in which non-choice candidates or deceased choice candidates can transfer the votes cast on them to one of the (living) choice candidates. 
The exact social choice theoretic ramifications of this delegation mechanism are left as a topic for future research.
Furthermore, this mechanism could be compared with modern proposals for vote delegation~\citep{alger2006voting,green2015direct,blum2016liquid}.

\subsection*{Negative aggregate scores and strategic voting}
We have seen how an increasing cost of winning undermines the proportionality guarantees of Frege's method until this cost stabilizes (after a potentially very long time).
A further disadvantage of the original method is that long-serving candidates tend to have high scores, which makes the entry of new candidates difficult, even if they have a strong public support.
In contrast, the modified Frege method has a constant cost of winning and, as we have seen, stronger proportionality guarantees.
However, here the aggregate scores of the candidates can be negative, which also leads to negative consequences.
The possibility of negative scores renders the modified Frege method vulnerable to the following type of manipulation. Once a candidate has a negative score, it is advantageous for this candidate to retract his or her candidacy in favor of a like-minded person (in social choice terminology a so-called clone), who then starts with a higher aggregate score of~$0$.
A complete study of the manipulability of Frege's voting method and the modified Frege method is subject to future research.

\subsection*{Transition from plurality to proportionality}
The plurality rule performs very badly when it comes to proportionality in the long run. 
If the electorate is assumed to be fixed, it will always elect the same candidate! 
Frege's method has much better proportionality properties, and the modified Frege method even better ones still. 
In an important and interesting sense, Frege's method can be seen as a gradual transition between a system based on plurality towards a system based on the modified Frege method, as the cost of winning is increased until it stabilizes at the size of the electorate and henceforward behaves like the modified Frege method.

\subsection*{Outlook and research directions}
The temporal or dynamic aspect of the Frege methods distinguishes them from most other voting methods that have been considered in the literature. 
As such they can also take into account changing electorates and changing opinions among the electorate.
Yet, the dynamic Frege methods still rely on the plurality rule in that plurality ballots are used and, for any election at any one single time, they simply select the candidate with the highest aggregate score.
Seen this way,  the Frege methods could easily be varied upon by considering other `static' social choice rules instead of the plurality rule, thus defining a new class of dynamic voting rules that can be studied from a social choice perspective in their own right. 
An obvious variation, for instance, would be to assume that the voters' ballots specify complete preference orders over the candidates. This would allow the computation of Borda scores.
The candidates then aggregate their respective Borda scores over time in a similar way as they aggregate plurality scores for the Frege methods.
At each election the candidate with the highest Borda score could then be chosen as representative and subsequently incur a certain cost of winning yet to be defined. 
In order to investigate this class of dynamic voting rules in a systematic and principled fashion, one may want to define  axioms that are specific to the temporal setting, like variable quota axioms. In particular it would be interesting to see if there is a dynamic voting rule that satisfies both variable upper and lower quota.

\section*{Acknowledgements}
Paul Harrenstein was supported by the European Research Council via ERC Advanced Investigator Grant 291528 (``RACE'') and ERC Grant 639945 (``ACCORD'') at Oxford, as well as by the Alan Turing Institute in London.
The financial support of Marie-Louise Lackner by the Austrian Federal Ministry for Digital and Economic Affairs and the National Foundation for Research, Technology and Development is gratefully acknowledged. 
Martin Lackner was supported by the Austrian Science Fund FWF, grant P31890, and ERC grant 639945.

\bibliographystyle{abbrvnat}
% \bibliography{frege}

\newpage
\appendix
\section{Proofs of Section~\ref{sec:prop-orig}}
\label{sec:app1}

\textbf{Lemma~\ref{lem:aggaggscore_leq_nm}.}\ \ 
\lemaggaggscoreleqnm

\begin{proof}
Let us establish a recursive definition for $a(t)$, starting with
\[
a(1)=\sum_{k\in\candidates}\aggscore_k^1 =\sum_{k\in\candidates}\pluralityscore_k^1=n.
\]
Let $j^*=\representative(t)$.	Then the following equalities hold:
\begin{align*}
		a(t+1) = \sum_{k\in C}\aggscore_k^{t+1}
		&	\mathwordbox{=}{===}	\aggscore^{t}_{j^*}+\pluralityscore_{j^*}^{t+1}-\left\lfloor\frac{1}{m}\sum_{k\in C}\aggscore^{t}_k\right\rfloor+\sum_{k\neq j^*}(\aggscore_k^{t}+\pluralityscore_k^{t+1})\\
		&	\mathwordbox{=}{===}
							\sum_{k}\aggscore_k^{t}
							+ \sum_{k}\pluralityscore_k^{t+1}
							-\left\lfloor\frac{1}{m}\sum_{k\in C}\aggscore^{t}_k\right\rfloor\\
		&	\mathwordbox{=}{===}	a(t)
							+ n
							-\left\lfloor\frac{1}{m}a(t)\right\rfloor \text{for } t\ge 1.
	\end{align*}

Now, let us start by proving that there exists a $t_0 \in \mathbb{N}$ such that $a(t) \ge n \cdot m$ for all $t \ge t_0$.
For this purpose, let us consider the simpler recursion $b(1)=n$ and $b(t+1)= b(t) +n -{{\frac{1}{m}\cdot b(t)}}$. This recursion has the solution $b(t) = nm\left(1-\left(\frac{m-1}{m}\right)^t\right)$. Note that $b(t)$ converges to $nm$ for $t\to \infty$. Furthermore, it holds that $b(t)\le a(t)$ (this can easily be shown by induction). Since $b(t)$ converges to $nm$ and $a(t)$ is integer-valued, there has to be a point in time $t_1$ such that $a(t)\ge nm$ for all $t \ge t_1$.

Let $t_0 \ge 1$ be the smallest possible choice for $t_1$, that is, $t_0$ is chosen such that $a(t_0) \ge nm$ and  $a(t_0-1) < nm$. We want to show that $a(t_0)=nm$. Let $i \in \mathbb{N}$ be such that $a(t_0 -1)=nm -i$. Then, using the recursion for $a(t)$, we obtain the following:
\begin{align*}
a(t_0) 	&=  a(t_0-1) + n - \left\lfloor\frac{a(t_0-1)}{m}\right\rfloor \\
		& =  nm - i + n - \left\lfloor\frac{nm -i}{m}\right\rfloor \\
		& = nm - i + n - n - \left\lfloor-\frac{i}{m}\right\rfloor \\
		& = nm - i + \left\lceil\frac{i}{m}\right\rceil \\
		& \le nm,
\end{align*}
since $- i + \left\lceil\frac{i}{m}\right\rceil \le 0$.
It follows that $a(t_0)=nm$.

Now, let us prove that $a(t)=nm$ for all $t \ge t_0$. Using the recursion for $a(t)$, we have $a(t_0 +1)=nm + n - \left\lfloor \frac{nm}{m}\right\rfloor=nm$. By induction $a(t)=nm$ for all $t\ge t_0$.

Let us finally show that the function $a(t)$ is monotonically increasing. From what was proven so far, we now that $0 \le a(t) \le nm$ and thus $0 \le \left\lfloor \frac{a(t)}{m}\right\rfloor \le n$ for all $t \ge 1$. Thus, we have for all $t \ge 1$:
\begin{align*}
a(t+1) & = a(t)	+ n  -\left\lfloor\frac{1}{m}a(t)\right\rfloor \ge a(t) +n -n =a(t).
\end{align*}
\end{proof}

\medskip
\noindent\textbf{Theorem~\ref{thm:frege_asymp}.}\ \
\thmfregeasymp

Let us first prove a technical lemma, which is required in the proof of Theorem~\ref{thm:frege_asymp}:
\begin{lemma}
Assume that the number of voters $n$ is fixed. For every candidate $j \in C$ there exists a positive integer $c_j$ such that for all $t \ge t_0$
\begin{equation}
\aggscore_j^{t+1}=\sum_{s=1}^{t+1}\pluralityscore_j^s - c_j -n\cdot  \left(\rho_j(t)-\rho_j(t_0)\right), \label{eqn:aggscore}
\end{equation}
where $t_0 \ge 1$ is such that $\sum_{k\in\candidates}\aggscore_k^{t_0}=nm$.
\label{lem:eqn_aggscore}
\end{lemma}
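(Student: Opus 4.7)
\textbf{Proof plan for Lemma~\ref{lem:eqn_aggscore}.} The approach is to unroll the recursion for $\aggscore_j^{t+1}$ into the accumulated plurality score of candidate $j$ minus the total cost this candidate has incurred for his wins so far, and then exploit Lemma~\ref{lem:aggaggscore_leq_nm} to identify all those costs incurred at times $\ge t_0$ with the constant value $n$.

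First I would set notation: write $W_j(t)=\{s\le t : \representative(s)=j\}$ for the set of times at which candidate $j$ has won up to time $t$, so that $\rho_j(t)=|W_j(t)|$. By the recursive definition of $\aggscore_j^{t+1}$, a straightforward unrolling (or equivalently, a telescoping argument / induction on $t$) gives the identity
\[
\aggscore_j^{t+1}=\sum_{s=1}^{t+1}\pluralityscore_j^s \;-\; \sum_{s\in W_j(t)}\left\lfloor\frac{a(s)}{m}\right\rfloor,
\]
valid for every $t\ge 0$, since each win contributes a subtraction of the cost of winning at that time and nothing else is removed from the score.

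The second step invokes Lemma~\ref{lem:aggaggscore_leq_nm}: for all $s\ge t_0$ we have $a(s)=nm$, so $\lfloor a(s)/m\rfloor = n$. Split the sum over $W_j(t)$ according to whether $s\le t_0$ or $s>t_0$, and define
\[
c_j \;=\; \sum_{s\in W_j(t_0)}\left\lfloor\frac{a(s)}{m}\right\rfloor,
\]
i.e.\ the total cost that candidate $j$ has already paid by the end of round $t_0$. This $c_j$ is a fixed non-negative integer depending only on the initial phase of the process. Then for every $t\ge t_0$,
\[
\sum_{s\in W_j(t)}\left\lfloor\frac{a(s)}{m}\right\rfloor \;=\; c_j + n\cdot\bigl|W_j(t)\setminus W_j(t_0)\bigr| \;=\; c_j + n\bigl(\rho_j(t)-\rho_j(t_0)\bigr),
\]
which when substituted into the previous identity yields exactly the claimed formula.

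I expect no substantial obstacle: the heart of the argument is purely bookkeeping, and the only non-trivial input is Lemma~\ref{lem:aggaggscore_leq_nm}, which is used only to stabilise $\lfloor a(s)/m\rfloor$ at the value $n$ once $s\ge t_0$. The only small subtlety is the indexing at the boundary, i.e.\ making sure that the win at time $t_0$ itself (if any) is absorbed into $c_j$ rather than into the $n(\rho_j(t)-\rho_j(t_0))$ term; the definition of $c_j$ above via $W_j(t_0)$ (and correspondingly $\rho_j(t_0)$) handles this cleanly. A minor remark worth including is that the lemma's wording \emph{positive integer} should be read as \emph{non-negative integer}: if candidate $j$ never wins up to time $t_0$, then $c_j=0$.
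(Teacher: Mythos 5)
Your proposal is correct and is essentially the paper's argument: the paper proves the same identity by induction on $t\ge t_0$ (with $c_j$ defined implicitly via $\aggscore_j^{t_0+1}=\sum_{s=1}^{t_0+1}\pluralityscore_j^s-c_j$), which is just a repackaging of your telescoping of the recursion plus the stabilisation of the cost at $n$ from Lemma~\ref{lem:aggaggscore_leq_nm}. Your explicit formula $c_j=\sum_{s\in W_j(t_0)}\lfloor a(s)/m\rfloor$ agrees with the paper's implicit one, and your remark that $c_j$ should be read as non-negative (it is $0$ if $j$ never wins by time $t_0$) matches the paper's own ``$c_j\ge 0$''.
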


\begin{proof}
Let $t_0 \ge 1$ be such that $\sum_{k\in\candidates}\aggscore_k^{t_0}=nm$ (the existence of such a $t_0$ was proven in Lemma~\ref{lem:aggaggscore_leq_nm}), and let $c_j\ge 0$ be such that
\[
\aggscore_j^{t_0+1}=\sum_{s=1}^{t_0+1}\pluralityscore_j^s-c_{j}.
\]
We shall prove equation~\eqref{eqn:aggscore} by induction over $t \ge t_0$.

For the induction start, let us consider time $t_0$:
\begin{align*}
\aggscore_j^{t_0+1}& =\sum_{s=1}^{t_0+1}\pluralityscore_j^s-c_{j} =\sum_{s=1}^{t_0+1}\pluralityscore_j^s-c_j- n \cdot \left(\rho_j(t_0)-\rho_j(t_0)\right).
\end{align*}

For the induction step, we distinguish whether $\representative(t+1)=j$ or not. We may assume that $\aggscore_j^{t+1}=\sum_{s=1}^{t+1}\pluralityscore_j^s - c_j -\left(\rho_j(t)-\rho_j(t_0)\right)\cdot n$.
In case $\representative(t+1)=j$, we have:
\allowdisplaybreaks
\begin{align*}\allowdisplaybreaks
\aggscore_j^{t+2}& = \aggscore_j^{t+1} + \pluralityscore_j^{t+2} - \left\lfloor\frac{1}{m}\sum_{k\in C}\aggscore^{t+1}_k\right\rfloor \\
& = \sum_{s=1}^{t+1}\pluralityscore_j^s - c_j -\left(\rho_j(t)-\rho_j(t_0)\right)\cdot n +  \pluralityscore_j^{t+2} -n  \\
& =\sum_{s=1}^{t+2}\pluralityscore_j^s-c_j- \left(\rho_j(t)+1-\rho_j(t_0)\right) \cdot n \\
& =\sum_{s=1}^{t+2}\pluralityscore_j^s-c_j- \left(\rho_j(t+1)-\rho_j(t_0)\right) \cdot n.
\end{align*}
In case $\representative(t+1)\neq j$, we have:
\begin{align*}\allowdisplaybreaks
\aggscore_j^{t+2}& = \aggscore_j^{t+1} + \pluralityscore_j^{t+2}  \\
& = \sum_{s=1}^{t+1}\pluralityscore_j^s - c_j -\left(\rho_j(t)-\rho_j(t_0)\right)\cdot n +  \pluralityscore_j^{t+2}  \\
& =\sum_{s=1}^{t+2}\pluralityscore_j^s-c_j- \left(\rho_j(t)-\rho_j(t_0)\right) \cdot n \\
& =\sum_{s=1}^{t+2}\pluralityscore_j^s-c_j- \left(\rho_j(t+1)-\rho_j(t_0)\right) \cdot n,
\end{align*}
which concludes the induction step.
\end{proof}

\begin{proof}[Proof of Theorem~\ref{thm:frege_asymp}]
Assuming a fixed electorate, equation~\eqref{eqn:aggscore} becomes:
\begin{equation}
\aggscore_j^{t+1}=(t+1)\cdot \pluralityscore_j - c_j -\left(\rho_j(t)-\rho_j(t_0)\right)\cdot n
\label{eqn:aggscore_fixed_electorate}
\end{equation}
or, equivalently,
\[
\frac{\rho_j(t)}{t}=\frac{\pi_j}{n}+\frac{1}{t}\left(\frac{\pi_j}{n}+ \rho_j(t_0) -\frac{c_j}{n}-\frac{\aggscore_j^{t+1}}{n}\right)
\]
To see that the expression $\frac{1}{t}(\dots)$ converges to 0, note the following:
First, $0 \le \frac{\pi_j}{n} \le 1$ and  $0 \le \rho_j(t_0) \le t_0$. Moreover,
$0 \le c_j \le \rho_j(t_0) \cdot n\le t_0\cdot n$ and $0\le \aggscore_j^{t+1} \le n \cdot m$. Thus, the term in brackets is bounded from below and above. We obtain:
\[
\lim_{t \to \infty} \frac{\rho_j(t)}{t} = \frac{\pi_j}{n}.
\]
Similarly, if the electorate is not fixed but the mean plurality scores converge, that is, $\lim_{t \to \infty} \nicefrac 1 t \cdot \sum_{s=1}^t\pluralityscore_j^s = \pluralityscore_j^*$ for all $j\in\candidates$, we have:
\[
\frac{\rho_j(t)}{t}=\frac{\sum_{s=1}^t\pluralityscore_j^s}{t \cdot n}+\frac{1}{t}\left(\frac{\pluralityscore_j^{t+1}}{n} + \rho_j(t_0) -\frac{c_j}{n}-\frac{\aggscore_j^{t+1}}{n}\right)
\]
and thus
\[
\lim_{t \to \infty} \frac{\rho_j(t)}{t} = \lim_{t \to \infty} \frac{\sum_{s=1}^t\pluralityscore_j^s}{t \cdot n} = \frac{\pluralityscore_j^*}{n}.
\]
\end{proof}

The following proposition shows that, for a fixed electorate, proportionality is not only guaranteed in the limit but eventually also within (finite) intervals.

\medskip
\noindent\textbf{Proposition~\ref{prop:periodic}.}\ \ 
\propperiodic

\begin{proof}
From Lemma~\ref{lem:aggaggscore_leq_nm} we know that  $0 \le \aggscore_k^t \le nm$ for all $k\in C$ and $t\geq 1$.
Thus the tuple $(\aggscore_1^t, \aggscore_2^t, \ldots, \aggscore_m^t)$ can only take finitely many values. Therefore there must be a time $t^* \ge 1$ and an integer $P$ such that
\[
(\aggscore_1^{t^*}, \aggscore_2^{t^*}, \ldots, \aggscore_m^{t^*}) = (\aggscore_1^{t^*+P}, \aggscore_2^{t^*+P}, \ldots, \aggscore_m^{t^*+P}).
\]
Given these two values $t^*$ and $P$, it clearly also holds that
\begin{align*}
(\aggscore_1^{t^* + k}, \aggscore_2^{t^*+k}, \ldots, \aggscore_m^{t^*+k}) & = (\aggscore_1^{t^*+P+k}, \aggscore_2^{t^*+P+k}, \ldots, \aggscore_m^{t^*+P+k}) \text{ for } k \in \mathbb{N} \text{ and thus } \\
(\aggscore_1^{t}, \aggscore_2^{t}, \ldots, \aggscore_m^{t}) & = (\aggscore_1^{t+P}, \aggscore_2^{t+P}, \ldots, \aggscore_m^{t+P}) \text{ for all } t \ge t^*.
\end{align*}
From equation~\eqref{eqn:aggscore} in Lemma~\ref{lem:eqn_aggscore} it follows that the following holds for $t \ge t^*-1$:
\begin{align*}
 \rho_j(t+P) - \rho_j(t_0) & = \frac{1}{n} \cdot \left( \pi_j \cdot (t+P +1) - c_j - \aggscore_j^{t+P+1}\right) \\
\text{ and } \quad \quad \, \rho_j(t) - \rho_j(t_0) & = \frac{1}{n} \cdot \left( \pi_j \cdot (t+1) - c_j - \aggscore_j^{t+1}\right).
\end{align*}
Thus
\[
\rho_j(t+P) - \rho_j(t)  = \frac{1}{n} \cdot \left( \pi_j \cdot P - (\aggscore_j^{t+P+1} - \aggscore_j^{t+1})\right)
 = \frac{\pi_j \cdot P}{n},
\]
which concludes the proof.
\end{proof}

\section{Proofs of Section~\ref{sec:stronger}}
\label{sec:app2}

\noindent\textbf{Theorem~\ref{thm:mod_frege_asymp}.}\ \ 
\thmmodfregeasymp

\newcommand{\lemmaaggaggscoreis}{For the modified Frege method, we have $\sum_{j\in C}s^t_j=1$ and $- 1 < s^t_j-p_j^t$ for all $j \in C$ and $t\ge 1$.}

\medskip
Let us first prove a technical lemma, which will yield the desired proportionality results.
\begin{lemma}\lemmaaggaggscoreis\label{lemma:aggaggscore_is_1}
\end{lemma}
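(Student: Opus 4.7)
The plan is to prove the two claims by simultaneous induction on $t$, with the first claim doing most of the work for the second.

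For the first claim, that $\sum_{j\in C}s_j^t=1$ for every $t\geq 1$, the base case $t=1$ is immediate: $s_j^1=p_j^1$ and, since the $p_j^1$ are normalized plurality scores, $\sum_{j\in C}p_j^1=\sum_{j\in C}\pi_j^1/n_1=1$. For the inductive step, let $j^*=\representative(t)$. Using the recurrence for the modified Frege method I split the sum according to whether a candidate is the winner or not, which gives
\[
\sum_{j\in C}s_j^{t+1}=\sum_{j\in C}s_j^t+\sum_{j\in C}p_j^{t+1}-1.
\]
The inductive hypothesis together with $\sum_{j\in C}p_j^{t+1}=1$ then yields $\sum_{j\in C}s_j^{t+1}=1$, as required.

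For the second claim, that $s_j^t-p_j^t>-1$, the base case is again trivial since $s_j^1-p_j^1=0>-1$. For the inductive step, fix $j\in C$ and distinguish two cases depending on whether $j$ was chosen as representative at time~$t$. If $j\neq\representative(t)$, then $s_j^{t+1}-p_j^{t+1}=s_j^t$; applying the inductive hypothesis and the fact that $p_j^t\geq 0$ gives $s_j^t>p_j^t-1\geq -1$. If $j=\representative(t)$, then $s_j^{t+1}-p_j^{t+1}=s_j^t-1$, and here the key step is to use the first part of the lemma: because $\sum_{k\in C}s_k^t=1$ and $j$ attains the maximum of the $s_k^t$, we must have $s_j^t\geq 1/m$, whence $s_j^t-1\geq 1/m-1>-1$ (using $m\geq 2$).

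The main (in fact, only non-routine) step is recognizing that one needs the invariant of the first claim to control the winner's score in the second claim: without the bound $s_{j^*}^t\geq 1/m$, subtracting $1$ from an a~priori unbounded negative score could easily push $s_j^{t+1}-p_j^{t+1}$ below $-1$. The two claims therefore have to be proven together, with the sum-to-one invariant supplying exactly the lower bound on the winner's aggregate score that the second claim requires.
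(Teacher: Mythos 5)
Your proof is correct and follows essentially the same route as the paper's: induction on $t$ for both claims, with the sum-to-one invariant supplying the bound $s_{j^*}^t\ge 1/m$ on the winner's score in the second claim. No gaps.
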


\begin{proof}
	The proof of the first statement is by induction on~$t$.
	For the basis $t=1$, we immediately have
	\[
		\sum_{j\in C}s^1_j
		\mathwordbox{=}{==}\sum_{j\in C}p^1_j
		\mathwordbox{=}{==}\sum_{j\in C}\frac{\pluralityscore^1_j}{n_1}
		\mathwordbox{=}{==}1
	\text.
	\]

	For the induction step, we assume $\sum_{j\in C}s^t_j=1$.
	Let $j^*=\representative(t)$. Then,
	\begin{align*}
		\sum_{j\in C}s^{t+1}_j
		&	\mathwordbox{=}{===}	\sum_{k\neq\winner}\left(s^t_k+p_k^{t+1}\right)+s^t_{j^*}+p_{j^*}^{t+1}-1\\
		&	\mathwordbox{=}{===}	\sum_{j\in C}s_j^t+\sum_{j\in C}p_j^{t+1}-1 = 1	+	1	-1 = 1.
	\end{align*}
The proof of the second statement is by induction on~$t$ as well.
	For the basis $t=1$, we immediately have
	\[
	s_j^1 -p_j^1 = p_j^1 -p_j^1 = 0 > -1\text.
	\]
	For the induction step, we assume $-1 < s_j^t -p_j^t$. We distinguish two cases. First, if $j \neq \representative(t)$, we have:
	\[
	s_j^{t+1} -p_j^{t+1}= s_j^t + p_j^{t+1} - p_j^{t+1} = s_j^t \ge s_j^t -p_j^t > -1.
	\]
	Second, if $j = \representative(t)$, it has to hold that $s_j^t \ge \frac{\sum_{j \in C} s_j^t}{m} = \frac{1}{m}$ (in order for $j$ to win). Thus we have:
	\[
	s_j^{t+1} - p_j^{t+1} = s_j^t + p_j^{t+1} -1  - p_j^{t+1} = s_j^t -1\ge \frac{1}{m} - 1 > -1.
	\]
\end{proof}

\begin{proof}[Proof of Theorem~\ref{thm:mod_frege_asymp}]
Note that the following holds for the modified Frege method:
\begin{equation}
s_j^{t+1}=\sum_{s=1}^{t+1}p_j^s -r_j(t) \text{ for all } t \ge 1 \text{ and } j \in C.
\label{eqn:aggscore_frege_mod}
\end{equation}
This corresponds to a normalized version of equation~\eqref{eqn:aggscore} in the proof of Lemma~\ref{lem:eqn_aggscore} with $r_j(t_0)=0$ and $c_j=0$.

It follows that
\[
\frac{r_j(t)}{t}=\frac{\sum_{s=1}^{t+1}p_j^s}{t} -\frac{s_j^{t+1}}{t}.
\]
Since we know from Lemma~\ref{lemma:aggaggscore_is_1} that $-1 \le s_j^{t+1} \le 1$,
both asymptotic results follow immediately.
\end{proof}

We proceed with another technical lemma, which is useful for proving Theorems~\ref{thm:upperquota} and~\ref{thm:lower-quota}.

\begin{lemma} For every $t\ge 1$ and $j\in C$:
		\[
		-1
		\mathwordbox{<}{===}  \sum_{s=1}^{t}p_j^s-r_j(t)
		\mathwordbox{\le}{===} \frac{m-1}{2}.
		\]
	 The second inequality  is strict for $m\ge 3$.
	 \label{lem:bounds_for_quota}
\end{lemma}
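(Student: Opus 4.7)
The plan is to rewrite the quantity of interest in terms of the modified Frege aggregate scores and then bound it by a case analysis on whether $j$ is the representative at time $t$. From equation~\eqref{eqn:aggscore_frege_mod} together with the recursion defining $s_j^t$, we have
\[
\sum_{s=1}^{t}p_j^s - r_j(t) = s_j^{t+1} - p_j^{t+1} = s_j^t - [\representative(t)=j].
\]
The lower bound is then immediate from Lemma~\ref{lemma:aggaggscore_is_1} applied at time $t+1$, which yields $s_j^{t+1} - p_j^{t+1} > -1$. So the real work is establishing the upper bound $s_j^t - [\representative(t)=j] \le (m-1)/2$, strict when $m \ge 3$.

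In the case $j \ne \representative(t)$ the quantity to bound is simply $s_j^t$. Since $\representative(t)$ realises the maximum aggregate score, $s_j^t \le s_{\representative(t)}^t$, and hence $2\,s_j^t \le s_j^t + s_{\representative(t)}^t$. Writing $s_j^t + s_{\representative(t)}^t = 1 - \sum_{k \notin \{j,\representative(t)\}} s_k^t$ and invoking the strict lower bound $s_k^t > p_k^t - 1$ from Lemma~\ref{lemma:aggaggscore_is_1} on each of the $m-2$ remaining candidates gives
\[
s_j^t + s_{\representative(t)}^t < (m-2) + p_j^t + p_{\representative(t)}^t \le m-1,
\]
so $s_j^t < (m-1)/2$. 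When $m=2$ the auxiliary set is empty and the argument collapses to the non-strict bound $s_j^t \le 1/2$.

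The harder case is $j = \representative(t)$, where one needs $s_j^t - 1 \le (m-1)/2$. Here I would exploit the trajectory of $j$'s score: let $\tau$ be the largest index in $\{1,\dots,t-1\}$ with $\representative(\tau) \ne j$, or set $\tau=0$ if $j$ has been representative in every round from $1$ to $t$. If $\tau=0$, telescoping the recursion yields $s_j^t = \sum_{s=1}^t p_j^s - (t-1) \le 1$, so $s_j^t - 1 \le 0 < (m-1)/2$. If $\tau \ge 1$, then $\representative(\tau+1)=\dots=\representative(t)=j$ and, combined with $p_j^s \le 1$, the telescoping identity becomes
\[
s_j^t - 1 = s_j^\tau + \sum_{s=\tau+1}^t (p_j^s - 1) \le s_j^\tau.
\]
Applying the already-treated non-winner case at time $\tau$ (valid since $\representative(\tau) \ne j$) bounds $s_j^\tau \le (m-1)/2$, strictly for $m \ge 3$, completing the argument.

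I expect the main obstacle to be spotting the reduction to this ``last non-winning time'' $\tau$; once this is in place, the telescoping and the reuse of the non-winner bound make the estimates essentially mechanical. The same case split also tracks the strictness transparently: for $m \ge 3$ every branch inherits strictness from Lemma~\ref{lemma:aggaggscore_is_1} being applied to a non-empty auxiliary set of candidates, whereas for $m=2$ that set is empty and the value $1/2$ is genuinely attained.
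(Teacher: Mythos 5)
Your proof is correct and follows essentially the same route as the paper's: the same reduction via $\sum_{s=1}^{t}p_j^s-r_j(t)=s_j^{t+1}-p_j^{t+1}$, the lower bound from Lemma~\ref{lemma:aggaggscore_is_1}, and the identical key estimate for non-winners obtained by pairing $j$ with the current representative and combining $\sum_k s_k^t=1$ with $s_k^t>p_k^t-1$. The only difference is organizational: where you unroll the winner case back to the last round $\tau$ at which $j$ did not win and telescope, the paper runs a forward induction whose inductive step in the winner case ($s_j^{t+2}-p_j^{t+2}=s_j^{t+1}-1\le s_j^{t+1}-p_j^{t+1}$) bottoms out at exactly that same non-winning round.
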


\begin{proof}
The lower bound of~$-1$ follows from a combination of Lemma~\ref{lemma:aggaggscore_is_1} and Theorem~\ref{thm:mod_frege_asymp}:
	By virtue of equation~\eqref{eqn:aggscore_frege_mod} in the proof of Theorem~\ref{thm:mod_frege_asymp}, the lower bound is equivalent to $-1< s_j^{t+1}-p_j^{t+1}$. Lemma~\ref{lemma:aggaggscore_is_1} ensures that this inequality holds for all $j \in C$ and all $t \ge 1$.

	To prove the upper bound of $\frac{m-1}{2}$, by virtue of equation~\eqref{eqn:aggscore_frege_mod}, it suffices to show by induction on~$t$ that
	$s_j^{t+1}- p_j^{t+1}\le \frac{m-1}{2}$ for all candidates~$j$ and~$t\ge 1$.
	For the basis, let $t=1$. First assume that $j=\representative(1)$.
	Then, observing that $0<p_j^1\le 1$,
	\[
		s^2_j-p_j^{2}
		\mathwordbox{=}{===}	s_j^1+p_j^{2}-1-p_j^{2}
		\mathwordbox{=}{===}	p_j^{1}-1
		\mathwordbox{\le}{===}	0
		\mathwordbox{<}{===}	\textstyle\frac{m-1}{2}\text,
	\]
	where the last inequality is strict as we always assume $m\ge 2$.
	Now assume that $j\neq \representative(1)$.
	Then, $p_j^{1}\le \frac{1}{2}$, as otherwise~$j$ would have been chosen as representative.
	Accordingly,
	\[
		s^2_j-p_j^{2}
		\mathwordbox{=}{===}	s_j^1+p_j^{2}-p_j^{2}
		\mathwordbox{=}{===}	p_j^{1}
		\mathwordbox{\le}{===}	\textstyle \frac{1}{2}
		\mathwordbox{\le}{===}	\textstyle\frac{m-1}{2}\text,
	\]
	where the last inequality is strict if $m\ge3$.

	We introduce the following notation: 
	we write $<^*$ to denote a weak inequality ($\le$) if $m=2$ and a strict inequality ($<$) for $m>2$.
	For the induction step, we may assume $s_j^{t+1}-p_j^{t+1}<^* \frac{m-1}{2}$ to prove that $s_j^{t+2}-p_j^{t+2}<^* \frac{m-1}{2}$.
	First assume that $j=\representative(t+1)$.
	Now the following inequalities hold.
	\begin{align*}
		s_j^{t+2}-p_j^{t+2}
		\mathwordbox{}{}\mathwordbox{=			}{===}\textstyle s_j^{t+1}+p_j^{t+2}-1-p_j^{t+2}
		\mathwordbox{}{}\mathwordbox{=			}{===}\textstyle s_j^{t+1}-1
		\mathwordbox{}{}\mathwordbox{\le 		}{===}\textstyle s_j^{t+1}-p_j^{t+1}
		\mathwordbox{}{=}\mathwordbox[l]{<^*_{i.h.}}{===}	\textstyle\frac{m-1}{2}\text.
	\end{align*}
	Now, let $j^*=\representative(t+1)$ and assume that $j\neq j^*$.
	Accordingly, $s^{t+1}_j\le s^{t+1}_{j^*}$.
	By virtue of Lemma~\ref{lemma:aggaggscore_is_1}, we have that
	\[
		 s_j^{t+1}+s_{j^*}^{t+1}+\sum_{k\in C\setminus\set{j,j^*}}s_k^{t+1}=1.
	\]
	As we saw above that $-1 < s_j^{t+1}-p_j^{t+1}$, we also have the following:
	\begin{align*}
		s_j^{t+1}+s_{j^*}^{t+1}
		&	\mathwordbox{=	}{===}\textstyle	1-\sum_{k\in C\setminus\set{j,j^*}}s_k^{t+1}					\\
		&	\mathwordbox{\le}{===}\textstyle		1-\sum_{k\in C\setminus\set{j,j^*}}(s_k^{t+1}-p_k^{t+1})\\
		&	\mathwordbox{<^*}{===}\textstyle 	1+m-2		\\
		&	\mathwordbox{=	}{===}\textstyle	\textstyle m-1
		\text{.}
	\end{align*}
	As $s^{t+1}_j\le s^{t+1}_{j^*}$, it follows that
		$s_j^{t+1}<^*\textstyle\frac{m-1}{2}$.
		Finally, since $r_j(t+1)=r_j(t)$,
		\begin{align*}
       s_j^{t+2}-p_j^{t+2}=
       \sum_{s=1}^{t+2}p_j^s -r_j(t+1)-p_j^{t+2}=
       \sum_{s=1}^{t+1}p_j^s -r_j(t)=       
       s_j^{t+1}<^*
	   {\textstyle \frac{m-1}{2},}
		\end{align*}
which concludes the induction.
\end{proof}

\medskip
\noindent\textbf{Theorem~\ref{thm:upperquota}.}\ \ 
\thmupperquota

\begin{proof}
By Lemma~\ref{lem:bounds_for_quota}, it holds that
\[\sum_{s=1}^{t}p_j^s-r_j(t) > -1,\]
and consequently
\[\norepr_j(t) < \sum_{s=1}^{t}p_j^s+1.\]
This is equivalent to
\[\norepr_j(t) \le \left\lceil\sum_{s=1}^{t}p_j^s\right\rceil,\]
which is exactly the condition for variable upper quota.
\end{proof}

\medskip
\noindent\textbf{Theorem~\ref{thm:lower-quota}.}\ \
\thmlowerquota

\begin{proof}
By Lemma~\ref{lem:bounds_for_quota}, for $m\ge 3$ it holds that,
\[\sum_{s=1}^{t}p_j^s-r_j(t) < \frac{m-1}{2} 
% \text{ for } m \ge 3,
\]
and consequently
\[\norepr_j(t) > \sum_{s=1}^{t}p_j^s - \frac{m-1}{2}.\]
This implies
\[\norepr_j(t) \ge \left\lfloor\sum_{s=1}^{t}p_j^s\right\rfloor- \left\lceil\frac{m-3}{2}\right\rceil.\]
For $m=3$, the last inequality becomes
\[\norepr_j(t) \ge \left\lfloor\sum_{s=1}^{t}p_j^s\right\rfloor\]
and thus lower quota is fulfilled.

For $m=2$,
variable lower quota follows from variable upper quota.
Let $C=\{a,b\}$ and $\sum_{s=1}^{t}p_j^s=x_j$ for $j\in\{a,b\}$.
Thus, $x_a + x_b = t$.
Towards a contradiction, assume without loss of generality that $r_a(t)<\lfloor x_a\rfloor$, that is, candidate~$a$'s lower quota is violated.
Then:
\begin{align*}
r_b(t) &= t - r_a(t)
> t - \lfloor x_a\rfloor
= t - \lfloor t-x_b\rfloor
= \lceil x_b\rceil,
\end{align*}
which is in contradiction to variable upper quota for candidate~$b$.
\end{proof}

\end{document}